\newtheorem{theorem}{Theorem\,}
\newtheorem{proposition}[theorem]{Proposition\,}
\newcommand{\var}{\mathop{\text{var}}}
\newcommand{\dif}{\mathrm{d}}
\newcommand{\PoI}{\mathop{\text{PoI}}}
\newcommand{\cov}{cov}
\begin{document}
\settopmatter{printacmref=false} % Removes citation information below abstract
\renewcommand\footnotetextcopyrightpermission[1]{} % removes footnote with conference information in first column
\pagestyle{plain} % removes running headers

\title{Optimal pricing
for peer-to-peer sharing with network externalities }
\titlenote{Part of the results will appear in the Proceedings of NetEcon'18.}
%\author{}
%\affiliation{%
  %\institution{}
%  \streetaddress{Address}
%  \city{City}
%  \state{State}
%  \postcode{Zipcode}
%}
%\email{}
\author{Yunpeng~Li}
\affiliation{\institution{Singapore University of Technology and Design}}
\email{yunpeng\_li@mymail.sutd.edu.sg}
\author{Costas~Courcoubetis}
\affiliation{\institution{Singapore University of Technology and Design}}
\email{costas@sutd.edu.sg}
\author{Lingjie~Duan}
\affiliation{\institution{Singapore University of Technology and Design}}
\email{ lingjie\_duan@sutd.edu.sg}
\author{Richard~Weber}
\affiliation{\institution{University of Cambridge}}
\email{ rrw1@cam.ac.uk}
%\author{Yunpeng~Li,~Costas~Courcoubetis,~and~Lingjie~Duan
%\thanks{Y. Li, C. Courcoubetis and L. Duan are with the Engineering Systems and Design Pillar, Singapore University of Technology and Design, Singapore 487372,
%Singapore (e-mail: yunpeng\_li@mymail.sutd.edu.sg, \{costas, lingjie\_duan\}@sutd.edu.sg).
%}}

\begin{abstract}
   In this paper, we analyse how a peer-to-peer sharing platform should
  price its service (when imagined as an excludable public good) to
  maximize profit, when each user's participation adds value to the
  platform service by creating a positive externality to other
  participants.
  To characterize  network externalities as a function of the number of participants, we consider different bounded and unbounded user utility models. The bounded utility model fits many infrastructure sharing applications with bounded network value,  in which complete coverage has a finite user valuation (e.g., WiFi or hotspot).
  The unbounded utility model fits the large scale data sharing and explosion  in social media, where it is expected that the network value follows Metcalfe's or Zipf's  law.
  For both models, we analyze the optimal pricing schemes to select heterogeneous users in the platform under complete and incomplete information of users' service valuations. We propose the concept of price of information (PoI) to characterize the profit loss due to lack of information, and present provable PoI bounds for different utility models.
  We show that the $\PoI=2$ for the bounded utility
  model, meaning that just half of profit is lost, whereas
the $\PoI\geq 2$ for the unbounded utility model and increases as
  for a less concave utility function.
We also show that the
  complicated differentiated pricing scheme which is optimal under
  incomplete user information, can be replaced by a single uniform
  price scheme that is asymptotic optimal.
  Finally, we extend our pricing schemes  to a  two-sided market by including a new group of `pure' service users contributing no externalities, and show that the platform may charge zero price to the original group of users in order to attract the pure user group.
\end{abstract}

\maketitle
\section{Introduction}
%First paragraph: We should first say with more powerful terminal devices (e.g., smartphones), many peer-to-peer platforms emerge ... where individuals can not only be users but also contributors to the platform services... Then give two typical examples: FON (for relating to the first model) and also online games/social networks... (second model)

%Second paragraph:  say such peer-to-peer sharing platfroms creates ...money...many subscribers (cite some news or white papers to tell huge revenue growth... billions). Say, how to price platform services is the key question for such platforms ... Yet the key challenge for such pricing design is the incomplete information about users... and there is a bad need of a simple pricing scheme easy to implement...

%Third paragraph should first introduce the related work. We should first introduce technical works for pricing mechanism design... truthful... user types... Yet they are complicated to implement... (say something critical). Then introduce more applied works regarding peer-to-peer sharing platforms...Elaborate their results... We should say they do not consider platform participants as both users and contributors... Technically, different from traditional pricing mechanisms... we want to design simple...

Due to advances in wireless technology and more powerful mobile
devices (e.g., smartphones), it is common today that when users join a
peer-to-peer sharing platform they not only enjoy the provided service
but also contribute to the service's value. There are roughly two types of peer-to-peer sharing platforms: infrastructure and content sharing \cite{8327515}. The former type of platforms allows users to cooperate and contribute physical resources to create networking or computing services. For example, FON is a WiFi sharing platform  whose user opens its home WiFi connection to the community and can access the others' WiFi access points  \cite{4509810}. The latter type of platforms includes online social media (e.g., WeChat, WhatsApp), where platform users create and share massive content with each other and their number has reached 1.6 billion in 2014. The global revenue of such peer-to-peer sharing platforms is fast growing and is expected to increase to US\$40 billions by 2022  \cite{Juniper}. How to price their services for selected users under network externalities is a key question for such profit-maximizing platforms.

Peer-to-peer sharing economy of such excludable public goods has been
widely studied in the recent literature.  \cite{ANTONIADIS2004133} and
\cite{10.1007/3-540-45598-1_9} study how to address the incentive
issues for efficient sharing in peer-to-peer networks via mechanism
design.  Courcourbetis and Weber in \cite{1626429} study pricing of an
infrastructure-sharing platform (e.g., peer-to-peer file sharing) and
find the network value (profit) in an asymptotic sense and find that
network value/profit is bounded when each user randomly caches and
shares a subset of distinct files.
Metcalfe and Zipf's laws study the network value for the social media
platforms, showing the service value to an individual increases
super-linearly with the total user number and is thus unbounded
\cite{Metcalfe}.  In \cite{4509810}, \cite{8327515} and
\cite{8016389}, users' dual modes (i.e., contributors and consumers)
are considered and optimal pricing schemes for network externalities
is designed under complete information. Assuming full information of
users' private utilities, \cite{doi:10.1287/opre.1120.1066}
investigates the optimal pricing according to the network structure,
and proposes a simplified approximation  using uniform pricing, i.e.\ every
 users sees the same price.
 \cite{7835123} further consider that the network externalities can be positive or negative, affecting the final pricing design.
%, the authors design the optimal pricing strategies for a service provider in a social network. Different from their work,  we consider a general public good model and consider pricing strategies given partial information about users' service valuations. \cite{musacchio2006wifi,4509810,6566899,7562462}  study the pricing problem for provisioning wireless network services.
Different from these works, we consider the challenging scenario of incomplete information for optimal pricing design of excludable public goods, and study the feasibility to employ a simple pricing approach for profit maximization (without users' reporting of private information as in VCG auction). The newly proposed concept, price of information is unique to characterize the profit loss due to lack of information.

 %Related work on two-sided market can be found in \cite{rochet2003platform,rochet2006two}. Different from classic two-sided market, in our model, we consider platform participants as both users and contributors.
%In this paper, we address the pricing  problem of  a for-profit platform. Given complete information about users' valuation, the platform can use differentiated pricing to transform the consumer surplus into profits. Given  incomplete information about users' valuation, the platform can still use differentiated pricing but there is a huge profit loss since incentives are needed for users to disclose their true valuation. We compare the maximal profits obtained by platform in these two scenarios and show that the profit ratio varies depending on which utility model we use. We also consider a uniform pricing scheme where the platform only needs to announce a single price and the users can determine whether to join the provisioning of services. We show that this uniform pricing scheme works asymptotically well compared with the differentiated pricing scheme. We show that they have equal asymptotical performances in terms of expected profits..
%%%%%%%%%%%%%%%%%%%%%%%%%%%%%%%%%%%
Our main contributions and key novelty are summarised as follows.
\begin{itemize}
\item We study the optimal pricing for a peer-to-peer sharing platform under incomplete information, by considering both the infrastructure and content sharing applications (with bounded and unbounded network externalities, see Section \ref{systemmodel}). The platform is profit-maximizing and designs pricing to include target users to contribute to the excludable public goods.
\item
For both bounded and unbounded user utility models, we analyze the optimal pricing schemes to select heterogeneous users in the platform under complete and incomplete information of users' service valuations. We propose the concept of price of information, which is defined as the ratio of profits under complete and incomplete information, to characterize the profit loss due to lack of information, and present provable PoI bounds for different utility models. We prove that the $\PoI=2$ for the bounded utility
  model, meaning that just half of profit is lost. For a general unbounded utility model, we prove the PoI is in the interval $[2,27/8]$, that is, PoI is
at least 2 and is greater for a less concave utility function.
\item
We simplify the complicated differentiated pricing scheme under incomplete information, by replacing it by a single uniform price. The uniform price mechanism does not need users to report their private information of service valuations and achieves asymptotical optimality as user number goes to infinity for both bounded and unbounded user utility models.
\item
We extend our pricing schemes  to a  two-sided market
by including a new group of `pure' service users contributing no externalities.
 We show that the platform needs to decide different pricing to different groups of users
and may charge zero price to the original group of users in order to attract the pure user group. We prove that the uniform pricing scheme is still asymptotically optimal as user number goes to infinity and that PoI increases as the fraction of original group of users decreases.
\end{itemize}

%The rest of the paper is organized as follows. Section~\ref{systemmodel} introduces the sharing platform model and discusses two utility models . Section \ref{BUF} analyses the platform's  pricing strategies given bounded utility model and Section \ref{UUF} analyses the platform's  pricing strategies given unbounded utility model. Section \ref{tsm} discusses a two-sided market in peer-to-peer service sharing network. Section \ref{simulation} presents the simulation results and  Section \ref{conclusion} concludes the paper.

\section{System Model}\label{systemmodel}

We consider a peer-to-peer platform who wants to maximize its profit. It faces a set of potential users
${N}=\{1,\dotsc,n\}$ who choose to participate in the subscribing to the platform service or not.
 Define binary variable  $\pi_i=1$ or $0$ , telling that user i will or will not participate. The vector
 $\pi=(\pi_1,\dotsc,\pi_n)$ summarizes all users' participation decisions. The total service value is denoted by $ \phi(\pi)$, which  is a function of $\pi$ to tell the network externalities.  Consider that each user contributes equally to the service as a public good, then $\phi(\pi)$ can be rewritten as a function of the number of platform users denoted by $m=\sum_{i=1}^n \pi$, that is, $\phi(\pi)=\phi(m)$. We will introduce the detailed formulation of bounded and unbounded $\phi(\cdot)$ in Sections \ref{butility} and \ref{unutility}, respectively.

 Users have heterogeneous service valuations towards the platform service. Let $\theta_i$ be the user $i$'s service valuation  and this is his private information. Without loss of generality, we assume
$\theta_1>\theta_2>\cdots>\theta_n$ and denote valuation vector  ${\theta}=(\theta_1,
\theta_2, \dotsc, \theta_n)$.
 The utility of a participant $i$ is proportional to his valuation and the total service value, that is, $\theta_i \phi(\pi)$.
The platform can charge differently for different users' subscriptions.  Let $p_i$ be the membership fee charged to  $i$.
The payoff of  user $i$  is his utility of the total service value minus the membership fee, that is,
\begin{equation}\label{up}
u_i=\pi_i(\theta_i \phi(\pi)- p_i).
\end{equation}

The platform's goal is to maximize its total profit and it may not include all users. Let $c$ be the platform cost (e.g., equipment fee for installing an access point in WiFi sharing )
  for adding a user to access shared service with the exsiting
others.
The total profit, denoted by $\Pi$, is a function of $\pi$ and $c$ as follows
\begin{equation}
\Pi=\sum_{i\in N}\pi_i(p_i-c).
\end{equation}
\subsection{Bounded User Utility Model}\label{butility}
In an infrastructure sharing platform, the service  coverage or value is bounded (e.g., by 100\% citywide), no matter how many users participate. Thus, user utility function is bounded in this model.  For modelling bounded $\phi(m)$, take WiFi sharing in a finite region of  a normalized unit square surface for example. $n$ users are randomly distributed in the square and each user can cover a circle of radius $r$ ($0<r<<1$) or an area $\pi r^2$.
The total coverage depends on the  total user number $m$. For an arbitrary point in the square surface, the probability that it is not covered by a single user is $\rho=1-\pi r^2$ and the probability that it is not covered by the $m$ users is $\rho^m$.
That is,
 \[\phi(m)=1-\rho^m,\]
which is bounded by 1 and is concavely increasing in $m$.
We  can rewrite user $i$'s payoff \eqref{up} as follows,
\begin{equation}\label{upb}
u_i=\pi_i(\theta_i(1-\rho^{\sum_{j\in N}\pi_j})- p_i).
\end{equation}
In Section \ref{BUF}, we will focus on this bounded utility model  and analyse the optimal  pricing schemes under complete and incomplete information.
\subsection{Unbounded User Utility Model}\label{unutility}
In an online social media, user utility increases super-linearly with the number of users, following Metcalfe's or Zipf's laws. Metcalfe's law suggests that a user will get equal benefits from the other $m-1$ participants. The user's utility is proportional to $m$ and when $m$ is sufficiently large,  $\phi(m)\approx m$ \cite{Metcalfe}.  Zipf's law suggests that a user will benefit from the others differently,  in inverse proportion to the frequency  with which he interacts with(i.e., frequency $1/i$ with the $i$-th closest user among $m$ users). Then  $\phi(m)=\sum_{i=1}^{m-1}(1/m)\approx \log m$  \cite{Metcalfe}.
 As a result, user $i$'s payoff \eqref{up} becomes,
 \begin{numcases}
{ u_i=}
\pi_i(\theta_i\log (\sum_{j}\pi_j)- p_i)\,,&\mbox{if Zipf's law;}\label{uplog}\\
\pi_i(\theta_i(\sum_{j}\pi_j)- p_i)\,,& \mbox{if Metcalfe's law.}\label{upu}
 \end{numcases}

\section{ Optimal Pricing for Bounded Utility Model}\label{BUF}
In this section, we will analyse  the platform's pricing strategy for bounded user utility  $\phi(m)=1-\rho^m$.
\subsection{Pricing under Complete Information }\label{combounded}
Under complete information about all users' valuations $\theta_i$'s, the platform's
optimization problem is to choose prices $p_i$'s and control admission $\pi_i$'s to maximize its
profit. The payoff of a participant in \eqref{upb} cannot be negative, otherwise he will choose not to participate.
Formally, the problem is
\begin{align}
&\max_{\{(\pi_i,\ p_i), {\ i\in{N}}\}}\ \sum_{i\in{N}}\pi_i (p_i-c)\nonumber\\
&\text{s.\ t.\ } \pi_i\bigg(\theta_i(1-\rho^{\sum_{j}\pi_j})- p_i
\bigg)\geq 0,\quad\text{for all } i\in{N}.\label{opt:completeinfob}
\end{align}
%where $c\in(0,1)$. We denote by $\Pi$ the platform's maximal profit.
At optimality, the constraints in problem
(\ref{opt:completeinfob}) are tight. For any user with $\pi_i=1$ or
$0$, it is optimal to leave a zero payoff to him
by setting the price to be
\begin{equation}%\label{eq:price_selection}
p_i^*(\pi)=\theta_i(1-\rho^{\sum_{j}\pi_j}) .\nonumber
\end{equation}
This result helps simplify problem (\ref{opt:completeinfob}) to
\begin{equation}\label{opt:completeinfo_pib}
  \max_{\{\pi_i,\ {i\in{N}}\}}\ \sum_{i\in{N}}\pi_i \bigg(\theta_i(1-\rho^{\sum_{j}\pi_j})-c\bigg).
\end{equation}
To help solve this problem, we start with a lemma about the platform's preference among users.

\begin{lemma}\label{lem:ij}
  At the optimality of problem \eqref{opt:completeinfo_pib}, for any two users $i, j\in N$ with
  $\theta_i>\theta_j$, if user $j$ is included in the platform  (i.e., $\pi_j=1$), then user $i$ should also be included
  ($\pi_i=1$).
\end{lemma}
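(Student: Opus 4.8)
The plan is to argue by contradiction using a swap (exchange) argument. Suppose that at some optimal solution $\pi$ of problem \eqref{opt:completeinfo_pib} we have $\pi_j=1$ but $\pi_i=0$, where $\theta_i>\theta_j$. The key structural observation is that the per-user profit term $\theta_k(1-\rho^m)-c$ depends on the participation vector only through the aggregate count $m=\sum_j\pi_j$ and the individual valuation $\theta_k$; holding $m$ fixed, this term is strictly increasing in $\theta_k$ because $1-\rho^m>0$ for every $m\geq 1$.

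First I would construct a competing feasible solution $\pi'$ by swapping the roles of $i$ and $j$: set $\pi'_i=1$, $\pi'_j=0$, and leave $\pi'_k=\pi_k$ for all $k\neq i,j$. Since we exchange exactly one included user for one excluded user, the total participant count is unchanged, $\sum_k\pi'_k=\sum_k\pi_k=m$. Hence the common network-value factor $1-\rho^m$ is identical under both $\pi$ and $\pi'$, which is precisely what makes the comparison clean.

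Next I would evaluate the profit difference. Writing $\Pi(\cdot)$ for the objective in \eqref{opt:completeinfo_pib}, all terms other than those for $i$ and $j$ cancel, leaving
\[
\Pi(\pi')-\Pi(\pi)=\big(\theta_i(1-\rho^m)-c\big)-\big(\theta_j(1-\rho^m)-c\big)=(\theta_i-\theta_j)(1-\rho^m).
\]
Because $\theta_i>\theta_j$ and $1-\rho^m>0$ (note $m\geq 1$, since $\pi_j=1$ guarantees at least one participant), this difference is strictly positive. Thus $\pi'$ yields strictly higher profit than $\pi$, contradicting the optimality of $\pi$, and the desired monotone inclusion structure follows.

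I do not expect a genuine obstacle here: the argument is a clean exchange that exploits the fact that the externality enters only through the aggregate $m$, so swapping a lower-valuation participant for a higher-valuation non-participant preserves the network value while improving the valuation-weighted term. The only point requiring care is confirming $1-\rho^m>0$ for the relevant $m$, which holds because $\rho\in(0,1)$ and at least one user (namely $j$) participates.
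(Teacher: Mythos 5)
Your proof is correct. Note that the paper itself states Lemma~\ref{lem:ij} without any proof (no appendix entry is devoted to it), so there is no authors' argument to compare against; your exchange argument is exactly the standard justification the lemma implicitly rests on: swapping the excluded higher-valuation user $i$ for the included lower-valuation user $j$ leaves the participant count $m$, and hence the network factor $1-\rho^{m}$, unchanged, and the resulting profit gain $(\theta_i-\theta_j)(1-\rho^{m})$ is strictly positive because $\rho\in(0,1)$, $m\geq 1$, and the paper's ordering assumption $\theta_1>\theta_2>\cdots>\theta_n$ makes $\theta_i>\theta_j$ strict. Your attention to the two points that could conceivably fail, namely $1-\rho^{m}>0$ and the strictness of the valuation gap, is precisely what makes the contradiction airtight, so the proposal fully supplies the justification the paper omits.
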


It follows from Lemma~\ref{lem:ij} that the platform will
select $m$ users with the largest service valuations and problem \eqref{opt:completeinfo_pib}
reduces to
\begin{equation}
  \max_{m\in N}\ \bigg((1-\rho^m)\sum_{i=1}^m \theta_i - mc\bigg).
\end{equation}

This problem's objective  function is not a monotonic function of $m$ and it is not possible to derive closed-form solution of $m$. Yet we can use the efficient one-dimensional  search method to find the optimal $m$ numerically.

\subsection{Pricings under Incomplete Information }\label{incombounded}
Under incomplete information, the platform does
not know $\theta_i$'s exactly but their distributions. We assume
$\theta_i$'s are independent and identically distributed on $[0,1]$ with cumulative distribution function
$F$. The cost is comparable and we have $c\in(0,1)$.
We will derive a optimal (differentiated) pricing scheme and then propose  a uniform pricing scheme as approximation. We will compare these two different pricing schemes asymptotically.
\subsubsection{Optimal/Differentiated Pricing Scheme}\label{ODPS}
 Under incomplete information, the platform will require each user $i$ to declare his $\theta_i$. Given the $\theta_i$'s (may or may not be truthful) declared by the users, the platform should choose $p_i$'s and $\pi_i$'s as functions of the $\theta_i$'s distribution
  to maximize its profit, i.e.,
\begin{align}
&\max_{\pi_i(\cdot),\ p_i(\cdot)} E_\theta \biggl(\sum_{i=1}^n \pi_i(\theta) (p_i(\theta)-c) \biggr) \label{eq:opt_Qb}\\
&\text{subject to}\nonumber\\
& E_{\theta_{-i}}\bigg(\pi_i(\theta_i,\theta_{-i}) \bigg(\theta_i (1-\rho^{\sum_{j}\pi_j(\theta_i,\theta_{-i})})-p_i(\theta_i,\theta_{-i})\bigg) \bigg)\geq 0,\label{constraint1b}\\
&E_{\theta_{-i}}\bigg(\pi_i(\theta_i,\theta_{-i})
\bigg(\theta_i  (1-\rho^{\sum_{j}\pi_j(\theta_i,\theta_{-i})})-p_i(\theta_i,\theta_{-i})\bigg)
\bigg)\nonumber \\ & \geq E_{\theta_{-i}}\bigg(\pi_i(\theta_i',\theta_{-i}) \bigg(\theta_i  (1-\rho^{\sum_{j}\pi_j(\theta_i,\theta_{-i})})-p_i(\theta_i',\theta_{-i})\bigg) \bigg),\label{constraint2b}\\
 &\text{for\ all}\ i\ \text{and} \ \theta_i',\nonumber
\end{align}
where $\theta_{-i}=(\theta_1,\cdots,\theta_{i-1},\theta_{i+1},\cdots,\theta_n)$ is a vector consists of all the users' valuations except $\theta_i$.
Constraint \eqref{constraint1b} is to ensure  individual rationality or participation, i.e., user $i$'s expected payoff conditional on $\theta_{-i}$ is nonnegative, and constraint \eqref{constraint2b}
 is to ensure  incentive compatibility, i.e., user $i$ must declare his valuation truthfully.

 Let us define three functions:
\begin{align}
g(\theta_i)&=\theta_i - \frac{1-F(\theta_i)}{f(\theta_i)},\label{g}\\
V_i(\theta_i)&=\int \pi_i(\theta_i,\theta_{-i})  (1-\rho^{\sum_{j}\pi_j(\theta_i,\theta_{-i})})\dif F^{n-1} (\theta_{-i}),\label{V}\\
P_i(\theta_i)&=\int \pi_i(\theta_i,\theta_{-i}) p_i(\theta_i,\theta_{-i}) \dif F^{n-1}(\theta_{-i}).\label{P}
\end{align}
Note that $\theta_i V_i(\theta_i)$ and $P_i(\theta_i)$ are the expected
utility and expected payment of user $i$ given his valuation report  $\theta_i$, respectively. Assume that $g$ is a nondecreasing function as in the literature of mechanism design. Intuitively, $g(\theta_i)$ is less than $\theta_i$ to give users incentives to truthfully report their $\theta_i$'s in the incomplete information scenario.
We let $g(\theta_{(i)})$ be the $i$th greatest among
  $g(\theta_1),\dots,g(\theta_n)$, then we have $g(\theta_{(1)})\geq\cdots\geq
  g(\theta_{(n)})$.
The following lemma helps simplify the constraints in
problem (\ref{eq:opt_Qb}).

\begin{proposition}[Necessary and sufficient for incentive compatibility]\label{lemma:optb}
  $V_i(\theta_i)$ is non-decreasing in $\theta_i$, and the differentiated pricing $P_i(\theta_i)$ is given by,
\begin{equation}\label{eq:P_Qb}
P_i(\theta_i)=\theta_i V_i(\theta_i)-\int_0^{\theta_i}V_i(\eta)d \eta.
\end{equation}
As a result, the platform's maximal profit, denoted by
$\Pi_D$, in
(\ref{eq:opt_Qb}) can be written as
\begin{equation}
\int \max_{m\in N} \bigg((1-\rho^m) \sum_{i=1}^{m}
  g(\theta_{(i)}) -mc\bigg) d F^n(\theta).\label{eq:mb}
\end{equation}
\end{proposition}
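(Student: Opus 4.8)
The plan is to follow the standard Myerson route: first reduce the incentive-compatibility (IC) and individual-rationality (IR) constraints to a monotonicity condition together with an explicit payment identity, then substitute that identity into the objective to expose the virtual-valuation function $g$, and finally optimize the allocation pointwise in $\theta$.

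First I would rewrite the expected payoff of a user $i$ whose true valuation is $\theta_i$ but who reports $\theta_i'$ as $\theta_i V_i(\theta_i')-P_i(\theta_i')$, using the definitions \eqref{V}--\eqref{P}, and denote the truthful value $U_i(\theta_i)=\theta_i V_i(\theta_i)-P_i(\theta_i)$. Writing the IC inequality \eqref{constraint2b} for the pair $(\theta_i,\theta_i')$ and again with the roles of the two types swapped, then adding, yields $(\theta_i-\theta_i')(V_i(\theta_i)-V_i(\theta_i'))\ge 0$, i.e.\ $V_i$ is non-decreasing. The same two inequalities sandwich the difference quotient of $U_i$ between $V_i(\theta_i')$ and $V_i(\theta_i)$, so $U_i$ is absolutely continuous with $U_i'(\theta_i)=V_i(\theta_i)$ a.e., and integrating gives $U_i(\theta_i)=U_i(0)+\int_0^{\theta_i}V_i(\eta)\dif\eta$. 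Because profit is increasing in payments, the IR constraint \eqref{constraint1b} binds at the lowest type, forcing $U_i(0)=0$; rearranging the definition of $U_i$ then produces exactly the payment formula \eqref{eq:P_Qb}. This settles the first half of the proposition, and the converse implication (these two conditions force IC) follows by reversing the integral bound.

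Next I would compute $E_{\theta_i}[P_i(\theta_i)]$ by inserting \eqref{eq:P_Qb} and integrating the $\int_0^{\theta_i}V_i$ term by parts; the boundary contributions vanish on $[0,1]$ (since $1-F(1)=0$ and the inner integral is $0$ at the left endpoint), and the factor $\theta_i-(1-F(\theta_i))/f(\theta_i)=g(\theta_i)$ emerges, giving $E_{\theta_i}[P_i(\theta_i)]=E_\theta[g(\theta_i)\pi_i(1-\rho^{\sum_j\pi_j})]$. Summing over $i$ and subtracting the cost $c\,E_\theta[\sum_i\pi_i]$ turns the expected profit into $E_\theta[(1-\rho^{\sum_j\pi_j})\sum_i\pi_i g(\theta_i)-c\sum_i\pi_i]$, which I would then maximize separately for each realization of $\theta$. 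For a fixed admitted count $m$ the bracket is largest when the $m$ users with the greatest $g$-values are selected, and optimizing over $m$ delivers the integrand $\max_m((1-\rho^m)\sum_{i=1}^m g(\theta_{(i)})-mc)$ of \eqref{eq:mb}.

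The step I expect to be the genuine obstacle is verifying that this pointwise-optimal rule is actually implementable, i.e.\ that the monotonicity of $V_i$ demanded above survives once the allocation is coupled across users through the externality term $1-\rho^{\sum_j\pi_j}$; unlike a separable single-item auction, raising $\theta_i$ can shift the optimal $m$ and hence every user's realized value. I would resolve this with an envelope/convexity argument: for fixed $\theta_{-i}$ the pointwise optimal value is a maximum of functions that are affine in $g(\theta_i)$ with slope $1-\rho^m$ when $i$ is admitted and slope $0$ otherwise, hence convex and non-decreasing in $g(\theta_i)$; its derivative $\pi_i(1-\rho^{m^*})$ is therefore non-decreasing in $g(\theta_i)$, and by the regularity assumption that $g$ is non-decreasing, non-decreasing in $\theta_i$. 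Taking the expectation over $\theta_{-i}$ shows $V_i$ is non-decreasing, so the rule is IC-implementable and the value \eqref{eq:mb} is genuinely attained.
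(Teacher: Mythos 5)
Your proposal is correct and follows the same Myerson route as the paper's own proof (IC $\Leftrightarrow$ monotonicity of $V_i$ plus the envelope payment formula, then substitution of virtual values $g$ and pointwise maximization over $\theta$ realizations), but it is more complete than the paper in two respects worth noting. First, you correctly observe that incentive compatibility alone only pins down $U_i(\theta_i)=\theta_i V_i(\theta_i)-P_i(\theta_i)$ up to the constant $U_i(0)$, and that the specific formula \eqref{eq:P_Qb} requires the individual-rationality constraint \eqref{constraint1b} to bind at the lowest type under profit maximization; the paper's appendix derives \eqref{eq:P_Qb} from the two IC inequalities alone ("by definition of Riemann integral"), which silently assumes $U_i(0)=0$. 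Second, and more substantively, your final step — verifying that the pointwise-optimal allocation is implementable, via the observation that the realized profit is a maximum of functions affine in $g(\theta_i)$ with slope $\pi_i(1-\rho^m)$, hence convex, so that the maximizing slope $\pi_i(1-\rho^{m^*})$ and therefore $V_i$ is non-decreasing in $\theta_i$ — is entirely absent from the paper. The paper's chain of equalities simply moves the maximization inside the integral to obtain \eqref{eq:mb}; that interchange yields an upper bound in general, and equality requires exactly the monotonicity check you supply (otherwise the pointwise-optimal rule might violate the IC constraint under which the objective was rewritten). So your proof closes a genuine, if standard, gap that the paper leaves implicit, at the modest cost of invoking the regularity assumption that $g$ is non-decreasing, which the paper also assumes.
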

The proof is given in Appendix~\ref{lemma:optbproof}.
Proposition \ref{lemma:optb} indicates that at the optimum, the platform will include
$m$ users whose $g(\theta_i)$'s are the greatest.

\subsubsection{Uniform Pricing Scheme As Approximation}\label{UPSA}
Although the differentiated pricing mechanism in \eqref{eq:P_Qb} is
  optimal, it is complicated to compute and implement in practice. While
  it guarantees that truthful reporting is the best response for users ,  it is
  difficult for a user to check \eqref{constraint2b} for any $\theta_i^\prime$ and $\theta_{-i}$. Next, we propose a uniform pricing scheme  which does not even require
users to declare their $\theta$'s.

In this simple scheme,  the platform announces a
single price $P$ to users without any admission control.
As users are i.i.d. distributed, there is a common valuation threshold $\bar{\theta}$ for subscription decision-making and $\bar{\theta}$ depends on $P$.
User $i$ will decide subscription by comparing his $\theta_i$ to
$\bar{\theta}$ and participates if $\theta_i\geq \bar{\theta}$.
Approximately $m=n(1-F(\bar{\theta}))$ users will finally
subscribe and contribute to the network externalities.. User $i$'s payoff in \eqref{upb} becomes
\[
u_i=\theta_i \big(1-\rho^{n(1-F(\bar{\theta}))}\big)-P\geq 0,\quad\text{for all }
\theta_i\geq \bar{\theta}.
\]
This should be zero for an indifferent user with $\theta_i=\bar{\theta}$. Thus,
\begin{equation}\label{uniformprice}
P=\bar\theta \big(1-\rho^{n(1-F(\bar{\theta}))}\big),
\end{equation}
which is a function of $\bar{\theta}$, or we can equivalently express $\bar{\theta}$ as a function of $P$.
The platform's optimization problem is
\begin{equation}\label{opt:incom:unib}
\max_{\bar\theta}n(1-F(\bar{\theta}))\bar{\theta} \big(1-\rho^{n(1-F(\bar{\theta}))}\big)-n(1-F(\bar{\theta}))c.
\end{equation}

Since each $\theta_i$ follows the
uniform distribution on $[0, 1]$,  problem \eqref{opt:incom:unib} becomes
\begin{equation}\label{opt:incom:uni2b}
\max_{\bar\theta}n(1-\bar{\theta})\bar{\theta} \big(1-\rho^{n(1-\bar{\theta})}\big)-n(1-\bar{\theta})c
\end{equation}
 The uniform pricing problem (though non-convex) can be solved efficiently via an one-dimensional search. We next present the analytical results as  $n\rightarrow\infty$ and characterize the network value/profit.

 \begin{theorem}\label{opt:incom:asymb}
Given users' bounded utility model in \eqref{upb},  as $n\rightarrow\infty$, the optimal uniform price under incomplete information is $P^*\rightarrow\frac{1+c}{2}$, the optimal user threshold is $\bar\theta^*\rightarrow\frac{1+c}{2}$
and the maximal profit is $\Pi_U\sim(\tfrac{1-c}{2})^2n.$
 As $n\rightarrow\infty$, the maximum profit achieved by the differentiated pricing scheme in \eqref{eq:P_Qb} is $\Pi_D\sim(\tfrac{1-c}{2})^2n.$
Thus, uniform pricing is asymptotically optimal, i.e.,
$\lim\limits_{n\rightarrow\infty}\frac{\Pi_U}{\Pi_D}\rightarrow1.$
\end{theorem}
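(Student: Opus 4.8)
The plan is to analyze the two pricing schemes separately and then compare their leading-order profits, since the theorem reduces to showing both are $\sim\left(\frac{1-c}{2}\right)^2 n$.

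\emph{Uniform pricing.} I would start from the objective \eqref{opt:incom:uni2b}. Writing $m=n(1-\bar\theta)$ for the expected mass of subscribers, I first argue that any near-optimal threshold must stay bounded away from $1$: if $\bar\theta\rightarrow 1$ then $m$ stays bounded and the profit is $O(1)$, whereas the choice $\bar\theta=\frac{1+c}{2}$ already yields profit of order $n$. Hence along the optimal sequence $m\rightarrow\infty$, so the factor $1-\rho^{n(1-\bar\theta)}\rightarrow 1$ exponentially fast (as $0<\rho<1$). Dividing the objective by $n$ then gives uniform convergence, on compact subsets of $[0,1)$, to the strictly concave quadratic $(1-\bar\theta)(\bar\theta-c)$, whose unique maximizer is $\bar\theta^*=\frac{1+c}{2}$ with value $\left(\frac{1-c}{2}\right)^2$. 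This pins down $\bar\theta^*\rightarrow\frac{1+c}{2}$, then $P^*=\bar\theta^*\bigl(1-\rho^{n(1-\bar\theta^*)}\bigr)\rightarrow\frac{1+c}{2}$ via \eqref{uniformprice}, and $\Pi_U\sim\left(\frac{1-c}{2}\right)^2 n$.

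\emph{Differentiated pricing.} For the differentiated scheme I would begin from the profit formula \eqref{eq:mb} in Proposition~\ref{lemma:optb}. Under the uniform prior $F(\theta)=\theta$ the virtual valuation \eqref{g} simplifies to $g(\theta)=2\theta-1$, so $\sum_{i=1}^m g(\theta_{(i)})=2\sum_{i=1}^m\theta_{(i)}-m$ is an affine function of a sum of the top $m$ order statistics. Setting $q=m/n$ and dividing by $n$, the integrand becomes $\psi_n(q)=(1-\rho^{nq})\,\frac1n\sum_{i=1}^{\lfloor nq\rfloor}g(\theta_{(i)})-qc$. The central estimate is that the normalized $L$-statistic $\frac1n\sum_{i=1}^{\lfloor nq\rfloor}g(\theta_{(i)})$ concentrates on $\int_{1-q}^{1} g(\theta)\,\dif\theta=q(1-q)$: its mean equals this up to $O(1/n)$ and its variance is $O(1/n)$, so the convergence is in $L^2$ and, by a Glivenko--Cantelli type argument on the empirical quantile, uniform in $q$. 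Combined with $\rho^{nq}\rightarrow 0$ for $q$ bounded away from $0$, this yields $\psi_n(q)\rightarrow\psi(q):=q(1-q)-qc$ uniformly on $[0,1]$.

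\emph{Main obstacle and conclusion.} The crux is to interchange the inner maximization over the discrete set $m\in\{1,\dots,n\}$ with both the expectation and the limit $n\rightarrow\infty$. I would handle this through the uniform convergence above: uniform convergence of $\psi_n$ to the continuous limit $\psi$ forces $\max_m\psi_n(m/n)\rightarrow\max_{q\in[0,1]}\psi(q)$ in probability, and since $g(\theta)\in[-1,1]$ makes $\psi_n$ uniformly bounded, dominated convergence upgrades this to convergence of $n^{-1}\Pi_D=E\!\left[\max_q\psi_n(q)\right]$. Two care points remain: the small-$q$ regime, where $1-\rho^{nq}$ need not be close to $1$ but the whole term is $O(q)$ and hence negligible there, and the rounding in $\lfloor nq\rfloor$, which is absorbed into the $O(1/n)$ error. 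Maximizing $\psi(q)=q(1-c)-q^2$ gives $q^*=\frac{1-c}{2}$ and $\psi(q^*)=\left(\frac{1-c}{2}\right)^2$, so $\Pi_D\sim\left(\frac{1-c}{2}\right)^2 n$. Since both schemes share the same leading coefficient, $\Pi_U\sim\Pi_D\sim\left(\frac{1-c}{2}\right)^2 n$, and therefore $\lim_{n\rightarrow\infty}\Pi_U/\Pi_D=1$, establishing asymptotic optimality of uniform pricing.
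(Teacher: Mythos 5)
Your proposal is correct, and for the differentiated-pricing half it takes a genuinely different route from the paper. The paper's proof (via its Lemmas~\ref{orderstat} and \ref{lemmaxb}) localizes the random argmax $m(\theta)$: using order-statistic moments and Chebyshev's inequality it shows $P\bigl(|m(\theta)-\tfrac{1-c}{2}n|>\epsilon n\bigr)=O(1/n)$, then splits the expectation into the rare-deviation event (contributing $O(1)$, since profit is at most $n$) and the main event, where the profit is bounded by partial sums of $g(\theta_{(i)})$ whose means give $\sim(\tfrac{1-c}{2})^2 n$; crucially it only proves an \emph{upper} bound on $\Pi_D$, invoking $\Pi_D\ge\Pi_U$ for the matching lower bound. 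You instead normalize the profit to $\psi_n(q)$, prove uniform-in-$q$ $L^2$ concentration of the L-statistic $\tfrac1n\sum_{i=1}^{\lfloor nq\rfloor}g(\theta_{(i)})$ onto $q(1-q)$, and pass the maximum and the expectation to the limit by continuity plus bounded convergence, which yields the exact limit $n^{-1}\Pi_D\to(\tfrac{1-c}{2})^2$ in one stroke (both bounds, no appeal to $\Pi_D\ge\Pi_U$). Your mean and variance computations are right (they are exactly what the paper's Lemma~\ref{orderstat} delivers), your handling of the small-$q$ regime and the floor-rounding is sound, and your uniform-pricing argument is the same as the paper's but more careful about discarding the $1-\rho^{n(1-\bar\theta)}$ factor. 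The trade-off: the paper's argmax-localization is more elementary and is reused almost verbatim for the unbounded-utility cases (Theorem~\ref{opt:incom:uniinfty}, Corollary~\ref{asymptoticlinear}), whereas your functional-convergence argument is cleaner and conceptually stronger but leaves the Glivenko--Cantelli/P\'olya step (pointwise concentration upgraded to uniformity in $q$) as a sketch that would need to be written out to be fully rigorous.
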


The proof is given in Appendix \ref{opt:incom:asymbproof}.  Theorem \ref{opt:incom:asymb} shows that uniform pricing scheme's profit grows at the same rate with $n$ as the differentiated pricing scheme.

\subsubsection{Price of Information}\label{sectionpoi}
 Now we are ready  to compare the expected maximal profits under complete and incomplete information.
 We define price of information (PoI) as the ratio of  the expected maximal profit under complete  and incomplete information as $n\rightarrow\infty$, i.e.,
 \begin{equation}\label{poidef}
 \PoI=\lim\limits_{n\rightarrow\infty}\frac{E_\theta(\Pi)}{\Pi_U}.
 \end{equation}
 PoI is of course
 similar in concept to the well-known idea of Price of
 Anarchy. However, the second refers to the social welfare that is
 lost when users act self-interestedly vis-a-vis for the community.
Note that price discounts are given as incentives under incomplete information, and the profit is greater under complete information. Thus, $\PoI\geq 1$.
One can also replace $\Pi_U$ by $\Pi_D$ in \eqref{poidef} without changing the PoI value, according to the uniform pricing's asymptotic optimality in Theorem \ref{opt:incom:asymb}.
 \begin{proposition}\label{poiexpoprop}
Given users' bounded utility model in \eqref{upb}, the price of information is $\PoI=2$.
\end{proposition}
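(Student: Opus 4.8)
The plan is to compute the leading-order asymptotics of the expected complete-information profit $E_\theta(\Pi)$ and then divide by $\Pi_U\sim(\tfrac{1-c}{2})^2 n$, which is already supplied by Theorem~\ref{opt:incom:asymb}. Recall from Section~\ref{combounded} that, for a fixed valuation realization with order statistics $\theta_1>\cdots>\theta_n$, the optimal complete-information profit is
\[
\Pi(\theta)=\max_{m\in N}\Bigl((1-\rho^m)\sum_{i=1}^m\theta_i-mc\Bigr).
\]
Since $\PoI\geq 1$ is already established, it suffices to show $E_\theta(\Pi)\sim\tfrac{(1-c)^2}{2}n$, because then $\PoI=\lim_{n}E_\theta(\Pi)/\Pi_U=\tfrac{(1-c)^2/2}{(1-c)^2/4}=2$.

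The central idea is that the bounded factor $(1-\rho^m)$ may be replaced by $1$ at leading order, because the profit-maximizing $m$ grows linearly in $n$ and hence $\rho^m\to0$. I would establish this by a sandwich argument. For the \emph{upper bound}, using $1-\rho^m\leq1$ gives $\Pi(\theta)\leq\max_m\bigl(\sum_{i=1}^m\theta_i-mc\bigr)=\sum_{i:\theta_i>c}(\theta_i-c)$, since the marginal term $\theta_i-c$ is positive exactly when $\theta_i>c$. For the \emph{lower bound}, I would evaluate the objective at the feasible admission rule that includes every user with $\theta_i\geq c'$, for an arbitrary fixed $c'\in(c,1)$. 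The number of admitted users $m'=\sum_i\mathbf{1}\{\theta_i\geq c'\}$ concentrates at $n(1-c')$ by the law of large numbers, so $\rho^{m'}\to0$ and $1-\rho^{m'}\geq1-\epsilon$ eventually, yielding $\Pi(\theta)\geq(1-\epsilon)\sum_{i:\theta_i\geq c'}\theta_i-m'c$.

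Taking expectations and applying the law of large numbers to these i.i.d.\ sums (with $F$ uniform, so $\int_c^1\theta\,\dif\theta$-type integrals appear), both the upper and lower bounds converge, after dividing by $n$, to $\int_{c}^{1}(\theta-c)\,\dif\theta=\tfrac{(1-c)^2}{2}$ once the threshold $c'\downarrow c$ and $\epsilon\downarrow0$. This sandwiches $E_\theta(\Pi)/n$ at $\tfrac{(1-c)^2}{2}$ and completes the computation, giving $\PoI=2$.

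The step I expect to be the main obstacle is making the replacement $(1-\rho^m)\to1$ rigorous uniformly over the maximization: one must argue that no near-optimal $m$ can be so small that $\rho^m$ stays bounded away from $0$ while still beating the linear-$m$ admission rule, and one must control the concentration of the random order-statistic sums $\sum_{i=1}^m\theta_i$ well enough to interchange the limit with both the maximum over $m$ and the expectation $E_\theta$. The feasible lower-bound rule is what tames this, since it forces $m=\Theta(n)$ and pins the correction $\rho^m$ at a quantity that is exponentially small in $n$ and therefore negligible in the ratio defining $\PoI$.
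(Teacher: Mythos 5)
Your proposal is correct and follows essentially the same route as the paper: replace the bounded factor $(1-\rho^m)$ by $1$ at leading order, reduce the complete-information profit to $\sum_{i:\theta_i\geq c}(\theta_i-c)$ with expectation $n\int_c^1(\theta-c)\,\dif\theta=\tfrac{(1-c)^2}{2}n$, and divide by $\Pi_U\sim\bigl(\tfrac{1-c}{2}\bigr)^2 n$ from Theorem~\ref{opt:incom:asymb}. If anything, your sandwich argument (upper bound from $1-\rho^m\leq 1$, lower bound from the feasible threshold rule at $c'>c$ with $m'=\Theta(n)$ forcing $\rho^{m'}\to 0$) supplies the rigor that the paper's own proof compresses into a single asserted ``$\sim$'' step.
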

The proof is given in Appendix \ref{poiexpopropproof}.
 We note that PoI does not depend on parameter $\rho$. Recall that $\rho$ tells the service coverage contributed by an individual user. As $n$ goes to infinity, the total  bounded coverage is fixed to 100\% , and hence $\rho$ has no impact on PoI .

\section{Optimal Pricing  For Unbounded Utility Model}\label{UUF}
In this section, we will analyse the platform's pricing strategy for unbounded user utility $\phi(m)=\log m$ .
 %function is given by  \eqref{uplog}.

\subsection{Pricing under Complete Information}
%Under complete information about all users' valuations $\theta_i$'s, the platform's optimization problem is to choose prices $p_i$'s and control admission $\pi_i$'s to maximize his profit.
Assume user's utility is given by \eqref{uplog}, which follows from Zipf's law.
Similar to Section \ref{combounded},
it is optimal to leave a zero payoff to user $i$
by setting the price to be
\begin{equation}%\label{eq:price_selection}
p_i^*(\pi)=\theta_i\log(\sum_{i}\pi_i) .\nonumber
\end{equation}
The platform's optimization problem is
\begin{equation}\label{opt:completeinfo_pi}
  \max_{\{\pi_i,\ {i\in{N}}\}}\ \sum_{i\in{N}}\pi_i \bigg(\theta_i\log (\sum_{j\in N} \pi_j)-c\bigg).
\end{equation}

 Lemma~\ref{lem:ij} still holds here, the problem
reduces to
\begin{equation}\label{opt:com:log}
  \max_{m\in N}\ \bigg(\log m\sum_{i=1}^m \theta_i - mc\bigg).
\end{equation}

Thus, similarly, the platform will
select $m$ users with the largest service valuations and we can use one-dimensional  search to find the optimal $m$.

\subsection{Pricings under Incomplete Information}
%Under incomplete information, the platform does not know $\theta_i$'s exactly but their distributions with cumulative distribution function $F$.
Inherit the same logic from Section~\ref{incombounded}, We will derive a optimal (differentiated) pricing
scheme and then propose a uniform pricing scheme as approximation.
We will compare these two different pricing schemes
asymptotically.
\subsubsection{Optimal/Differentiated Pricing Scheme}
The platform's optimization problem in differentiated pricing scheme is the same as
\eqref{eq:opt_Qb}-\eqref{constraint2b} except service value $(1-\rho^{\sum_j\pi_j(\theta_i,\theta_{-i})})$ is replaced by $\log(\sum_{j=1}^n \pi_j(\theta_i,\theta_{-i}))$.
 %Under incomplete information, the platform will require each user $i$ to declare his $\theta_i$. Given the $\theta_i$'s (may or may not be truthful) declared by the users, the platform should choose $p_i$'s and $\pi_i$'s as functions of the $\theta_i$'s distribution to maximize his profit, i.e.,
% Similar to Section \ref{ODPS}, the platform's optimization problem is
%\begin{align*}
%&\max_{\pi_i(\cdot),\ p_i(\cdot)} E_\theta \biggl(\sum_{i=1}^n \pi_i(\theta) (p_i(\theta)-c) \biggr) %\label{eq:opt_Q}
%\\
%&\text{subject to}\nonumber\\
%& E_{\theta_{-i}}\bigg(\pi_i(\theta_i,\theta_{-i}) \bigg(\theta_i \log(\sum_{j=1}^n %\pi_j(\theta_i,\theta_{-i}))-p_i(\theta_i,\theta_{-i})\bigg) \bigg)\geq 0,
%\label{constraint1}
%\\
%&E_{\theta_{-i}}\bigg(\pi_i(\theta_i,\theta_{-i})
%\bigg(\theta_i \log (\sum_{j=1}^n
%\pi_j(\theta_i,\theta_{-i}))-p_i(\theta_i,\theta_{-i})\bigg)
%\bigg)\nonumber \\
% & \geq E_{\theta_{-i}}\bigg(\pi_i(\theta_i',\theta_{-i}) \bigg(\theta_i \log(\sum_{j=1}^n \pi_j(\theta_i',\theta_{-i}))-p_i(\theta_i',\theta_{-i})\bigg) \bigg),
 %\label{constraint2}
% \\
%&\text{for\ all}\ i\ \text{and} \ \theta_i'.\nonumber
%\end{align*}
We can similarly define $g(\theta_i)$, $V_i(\theta_i)$, and $P(\theta_i)$ as in \eqref{g}, \eqref{V}, and \eqref{P}.
Then Proposition~\ref{lemma:optb} still holds here, and similar to \eqref{eq:mb}, the platform's maximal profit can be written as
\begin{equation}
\int \max_{m\in N} \bigg(\log m \sum_{i=1}^{m}
  g(\theta_{(i)}) -mc\bigg) \dif F^n(\theta).\label{eq:munbouned}
\end{equation}

This indicates that at the optimum, the platform will include
$m$ users whose $g(\theta_i)$'s are the greatest.

\subsubsection{Uniform Pricing Scheme as Approximation}
Now we analyse the uniform pricing mechanism. Similar to Section \ref{UPSA}, the payoff should be zero for an indifferent user with $\theta_i=\bar{\theta}$. Thus, similar to \eqref{uniformprice}, we have
\[
P=\bar\theta \log\big(n(1-F(\bar{\theta})\big),
\]
and the platform's optimization problem is
%\begin{equation}\label{opt:incom:uni}
%\max_{\bar\theta}n(1-F(\bar{\theta}))\bar{\theta} \log\big(n(1-F(\bar{\theta}))\big)-n(1-F(\bar{\theta}))c
%\end{equation}
%Since each $\theta_i$ has the
%uniform distribution on $[0, 1]$,  problem \eqref{opt:incom:uni} becomes
\begin{equation}\label{opt:incom:uni2}
\max_{\bar\theta}\bar{\theta}(1-\bar{\theta})n\log(n(1-\bar{\theta}))-n(1-\bar{\theta})c.
\end{equation}
The uniform pricing problem (though non-convex) can be
solved efficiently via an one-dimensional search.
We next
present the analytical results as $n\rightarrow\infty$ and characterize the
network value/profit.

 \begin{theorem}\label{opt:incom:uniinfty}
 Given users' unbounded utility model in \eqref{uplog}, as $n\rightarrow\infty$, the optimal uniform price under incomplete information is $P^*\rightarrow\tfrac{1}{2}\log \frac{n}{2}$, the optimal user threshold is $\bar\theta^*\rightarrow\tfrac{1}{2}$ and the
 maximal profit is
 $\Pi_U\sim\frac{n}{4}\log(\frac{n}{2}).$
As $n\rightarrow\infty$, the maximum profit achieved by the differentiated pricing scheme  is
$\Pi_D\sim\frac{n}{4}\log(\frac{n}{2}).$
Therefore, uniform pricing is asymptotically optimal, i.e.,
$\lim\limits_{n\rightarrow\infty}\frac{\Pi_U}{\Pi_D}\rightarrow1.$
\end{theorem}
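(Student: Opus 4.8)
The plan is to treat the uniform and differentiated schemes separately and then sandwich them. Since the differentiated scheme is the profit-maximising mechanism while uniform pricing is merely a feasible (restricted) special case, we have $\Pi_D \ge \Pi_U$ for every $n$; hence it suffices to pin down the exact leading-order asymptotics of $\Pi_U$ and to bound $\Pi_D$ from above by the same quantity.

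First I would analyse the uniform-price objective \eqref{opt:incom:uni2}. Writing $x = 1-\bar\theta \in (0,1)$, the objective becomes $f(x) = nx\big[(1-x)\log(nx) - c\big]$, and a direct differentiation gives the first-order condition $(1-2x)\log(nx) + (1-x) - c = 0$. Since $\log(nx) \rightarrow \infty$ for any fixed $x>0$, the left-hand side can vanish only if the coefficient $1-2x$ tends to $0$; writing $x = \tfrac12 + \varepsilon$ and expanding yields $\varepsilon \sim \frac{1/2 - c}{2\log(n/2)} \rightarrow 0$. This establishes $\bar\theta^* \rightarrow \tfrac12$, and substituting back into $P = \bar\theta\log\big(n(1-\bar\theta)\big)$ gives $P^* \rightarrow \tfrac12\log\frac{n}{2}$ together with $f(x^*) = \frac{n}{4}\log\frac{n}{2} - \frac{nc}{2} + O(n)$, i.e.\ $\Pi_U \sim \frac{n}{4}\log\frac{n}{2}$.

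Next I would treat the differentiated profit \eqref{eq:munbouned}. For the uniform distribution $F(\theta)=\theta$ one computes $g(\theta) = 2\theta - 1$, which is increasing, so selecting the top $m$ values of $g$ coincides with selecting the $m$ largest order statistics $\theta_{(i)}$. The key step is a uniform law of large numbers for the empirical partial sums: writing $m = ny$, one has $\frac1n \sum_{i=1}^{ny} g(\theta_{(i)}) \rightarrow \int_{1-y}^1 (2\theta-1)\,\dif\theta = y(1-y)$, and crucially this convergence is \emph{uniform} in $y \in [0,1]$. Substituting turns the inner bracket of \eqref{eq:munbouned} into $ny\big[(1-y)\log(ny) - c\big] + o(n\log n)$, which is precisely the function $f$ maximised above; hence on the high-probability event where the concentration holds the inner maximum equals $\frac{n}{4}\log\frac{n}{2}\,(1+o(1))$. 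Taking expectations and controlling the negligible complement (on which the integrand is only polynomially large while its probability decays exponentially) yields $\Pi_D \le \frac{n}{4}\log\frac{n}{2}\,(1+o(1))$.

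Combining $\Pi_U \sim \frac{n}{4}\log\frac{n}{2}$, the trivial inequality $\Pi_D \ge \Pi_U$, and this upper bound sandwiches both profits at the same leading order and gives $\Pi_U/\Pi_D \rightarrow 1$. The main obstacle is the differentiated step: one must make the replacement of the random partial sums by their deterministic quantile profile $y(1-y)$ uniform in the cutoff $m$, so that the maximisation over $m$ commutes with the limit, and must separately verify that the rare event where the order statistics deviate from this profile contributes only $o(n\log n)$ to the expectation. The uniform-pricing step, by contrast, is routine calculus once the $\log(nx)\rightarrow\infty$ scaling is exploited to force $1-2x\rightarrow 0$.
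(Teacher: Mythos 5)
Your proposal is correct, and its uniform-pricing half is essentially the paper's own argument done more carefully: the paper simply discards the cost term as $n\to\infty$ and reads off $\bar\theta^*=\tfrac12$, whereas your first-order-condition expansion $\bar\theta^*=\tfrac12-\frac{1/2-c}{2\log(n/2)}(1+o(1))$ is a sharpened version of the same step. Where you genuinely diverge is in the upper bound on $\Pi_D$, which is the hard half. The paper never controls the whole partial-sum process: it first localizes the random argmax $m(\theta)$ to the window $[(\tfrac12-\epsilon)n,(\tfrac12+\epsilon)n]$ by a monotonicity argument on profit increments (Lemma \ref{lemmalog}(i)), shows this localization fails only with probability $O(1/n)$ via Chebyshev applied to \emph{single} order statistics $g(\theta_{(an)})$ (Lemmas \ref{lemmaxb}(i) and \ref{lemmalog}(ii)), and then on the good event bounds the maximum crudely using $g\le 1$ together with the order-statistic means of Lemma \ref{orderstat}. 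You instead invoke a uniform law of large numbers for the entire quantile profile, $\sup_{y\in[0,1]}\bigl|\tfrac1n\sum_{i=1}^{\lfloor ny\rfloor}g(\theta_{(i)})-y(1-y)\bigr|\to 0$, which converts the random inner maximization into exactly the deterministic program $\max_y\, ny\bigl[(1-y)\log(ny)-c\bigr]$ already solved in the uniform-pricing step. Your route is more unified and conceptually cleaner---it makes transparent \emph{why} the two schemes coincide at leading order---but it requires an empirical-process ingredient the paper avoids (Glivenko--Cantelli with a rate, e.g.\ the DKW inequality, to justify both the uniformity in $y$ and your exponential tail claim), plus the observation that the uniform $o(n)$ partial-sum error remains $o(n\log n)$ after multiplication by $\log m\le\log n$; the paper's route needs only second moments of order statistics and Chebyshev. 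Note also that your exponential-decay claim for the rare event, while true under DKW, is stronger than necessary: a Chebyshev-type $O(1/n)$ bound already suffices, since the integrand is at most $n\log n$ on that event, which is precisely how the paper closes its argument.
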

The proof is given in Appendix \ref{opt:incom:uniinftyproof}. Note that the cost $c$ does not play a role in the optimal price or maximal profit. This is because when utility is unbounded, as $n\rightarrow\infty$, the user's perceived network value grows super-linearly with the number of participants, while the cost only grows linearly and is negligible.

 We next also consider Metcalfe's law rather than Zipf's law and $\phi(m)= m$ as a less concave function than $log(m)$ . Then user $i$'s payoff  is now given by \eqref{upu} and we can prove similar results as Theorem \ref{opt:incom:uniinfty}  below. The proof is given in Appendix \ref{asymptoticlinearproof}.
\begin{corollary}\label{asymptoticlinear}
Given users' unbounded utility model in \eqref{upu}, as $n\rightarrow\infty$, the optimal uniform price under incomplete information is $P^*\rightarrow(2/9)n$,  the optimal user threshold $\bar\theta^*=1/3$, and the maximal profit is
$\Pi_U\sim(4/27)n^{2}.$
As $n\rightarrow\infty$, the maximum profit achieved by the differentiated pricing scheme is $\Pi_D\sim(4/27)n^{2}.$
Therefore, uniform pricing is asymptotically optimal, i.e.,
$\lim\limits_{n\rightarrow\infty}\frac{\Pi_U}{\Pi_D}\rightarrow1.$
\end{corollary}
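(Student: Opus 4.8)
The plan is to mirror the proof of Theorem~\ref{opt:incom:uniinfty}, substituting the linear service value $\phi(m)=m$ of Metcalfe's law \eqref{upu} in place of $\log m$, and to carry out the asymptotic analysis of both the uniform and the differentiated schemes. First, for the uniform pricing scheme I would follow Section~\ref{UPSA}. An indifferent user with $\theta_i=\bar\theta$ perceives service value $\phi(m)=m=n(1-\bar\theta)$ (using $F(\bar\theta)=\bar\theta$ for the uniform distribution), so the zero-payoff condition gives $P=\bar\theta\,n(1-\bar\theta)$. The profit is $\Pi_U(\bar\theta)=n(1-\bar\theta)\bigl(\bar\theta\,n(1-\bar\theta)-c\bigr)=n^2\bar\theta(1-\bar\theta)^2-n(1-\bar\theta)c$. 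Since the first term is of order $n^2$ while the cost term is only of order $n$, as $n\to\infty$ the maximiser is determined by $\max_{\bar\theta}\bar\theta(1-\bar\theta)^2$. Differentiating gives $(1-\bar\theta)(1-3\bar\theta)=0$, whose relevant root is $\bar\theta^*=1/3$; this yields $\bar\theta^*(1-\bar\theta^*)^2=4/27$, hence $\Pi_U\sim(4/27)n^2$ and $P^*=\bar\theta^*\,n(1-\bar\theta^*)=(2/9)n$, as claimed.

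Next, for the differentiated scheme, Proposition~\ref{lemma:optb} carries over with $\log m$ replaced by $m$, so the maximal profit takes the form $\Pi_D=\int\max_{m\in N}\bigl(m\sum_{i=1}^m g(\theta_{(i)})-mc\bigr)\dif F^n(\theta)$, the analogue of \eqref{eq:munbouned}. For the uniform distribution, \eqref{g} gives $g(\theta)=2\theta-1$. Writing $m=n(1-\bar\theta)$, the selected users are those with the largest $g$-values and hence (since $g$ is nondecreasing) the largest valuations, i.e.\ those with $\theta_i\geq\bar\theta$, and a law-of-large-numbers argument gives $\sum_{i=1}^m g(\theta_{(i)})\approx n\int_{\bar\theta}^1(2\theta-1)\dif\theta=n\bar\theta(1-\bar\theta)$. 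Substituting, the bracketed objective concentrates around $n^2\bar\theta(1-\bar\theta)^2-n(1-\bar\theta)c$, which is exactly the uniform-scheme objective. Hence the same maximisation gives $\Pi_D\sim(4/27)n^2$, and therefore $\lim_{n\to\infty}\Pi_U/\Pi_D=1$.

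The main obstacle will be making this last step fully rigorous: passing the law-of-large-numbers approximation through the random inner maximisation $\max_{m}$ and then through the outer expectation $\int(\cdot)\dif F^n$. I would first show that $\tfrac1n\sum_{i:\,\theta_i\geq\bar\theta}g(\theta_i)$ converges uniformly in $\bar\theta$ to $\int_{\bar\theta}^1 g$, then argue that the (random) optimal fraction $m^*/n$ concentrates at $1-\bar\theta^*=2/3$, and finally bound both the contribution of the order-of-$n$ cost term and the order-statistic fluctuations, so that dividing by $n^2$ and taking $n\to\infty$ yields the leading constant $4/27$ in expectation. Everything else reduces to the same one-variable optimisation already solved for the uniform scheme, so once the concentration argument is in place the two asymptotics coincide and the asymptotic optimality ratio follows immediately.
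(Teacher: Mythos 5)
Your proposal is correct and its overall architecture coincides with the paper's proof: the uniform-price part is the identical one-variable computation (drop the $O(n)$ cost term and maximize $\bar\theta(1-\bar\theta)^2n^2$, giving $\bar\theta^*=1/3$, $P^*=(2/9)n$, $\Pi_U\sim(4/27)n^2$), and for the differentiated part both you and the paper reduce matters to the upper bound $\Pi_D\leq(4/27+o(1))n^2$ --- the lower bound being free since $\Pi_D\geq\Pi_U$ --- via concentration of the optimal participation fraction at $2/3$ plus an order-statistics estimate of the expectation. The difference lies in the concentration device. The paper (Lemma \ref{lemma}) argues pointwise through the increment statistic $h(m)=m\,g(\theta_{(m)})+\sum_{i=1}^{m-1}g(\theta_{(i)})-c$, computing $E[h(an)]\sim(2-3a)an$ and $\var[h(an)]=O(n)$ from the covariances of Lemma \ref{orderstat} and applying Chebyshev; it must then handle a genuine subtlety (parts (i) and (v) of Lemma \ref{lemma}): because the externality term makes it profitable to include users with \emph{negative} virtual valuation, $h(m)$ is not monotone in $m$, so the event $\{m(\theta)\geq m\}$ is not equivalent to $\{h(m)\geq 0\}$ and an extra bound on $P(g(\theta_{(m)})\geq 0)$ is required. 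Your route --- uniform convergence of $\tfrac1n\sum_{i=1}^{\lfloor an\rfloor}g(\theta_{(i)})$ to $\int_{1-a}^{1}(2\theta-1)\,\dif\theta=a(1-a)$, so that the normalized objective converges uniformly to $a^2(1-a)$, whose unique maximum $4/27$ at $a=2/3$ passes through the inner $\max_m$, with the bound $g\leq 1$ (normalized objective at most $1$) justifying the exchange with the outer expectation --- sidesteps the unimodality issue entirely, since a uniform limit needs no structure on increments; the paper accomplishes the same exchange instead by splitting the integral on the event $|m(\theta)-\tfrac{2}{3}n|>\epsilon n$. Both arguments close the proof: the paper's yields an explicit $O(1/n)$ probability rate, while yours is more elementary and portable to other concave $\phi$; but note that at the crucial step yours is still a plan --- the uniform law of large numbers for partial sums of order statistics and the interchange with $E_\theta$ (your self-identified ``main obstacle'') are precisely the content that the paper's Lemma \ref{lemma} supplies, so they must be written out for the proof to be complete.
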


\subsubsection{Price of Information}

 We can still define price of information by  \eqref{poidef}.
 %It follows from Theorem \ref{opt:incom:uniinfty} and Corollary \ref{asymptoticlinear} that, given users' bounded model in \eqref{uplog} or \eqref{upu}, replacing $\Pi_U$ by $\Pi_D$ does not change the value of $\PoI$.
  %For tractability, we still assume that $\theta_i \sim U[0,1]$.
  We more generally consider  users' payoff function (not limited to \eqref{uplog} and \eqref{upu}) as follows,
  \begin{equation}
  u_i=\theta_iv(m)-p_i,\label{upgeneral}
  \end{equation}
  where $v(m)$ is an unbounded, increasing and concave function with  $v(0)=0$.
  Then we have the following proposition.
 \begin{proposition}\label{poiu}
 Given users' general unbounded utility model in \eqref{upgeneral}, the price of information is  $\PoI\in[2,27/8]$.
More specifically, if  users' utility model follows Zipf's law in \eqref{uplog}, $\PoI=2$. If users'  utility model follows Metcalfe's law in \eqref{upu}, $\PoI=27/8$.
\end{proposition}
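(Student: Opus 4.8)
The plan is to evaluate both profits appearing in \eqref{poidef} in the large-$n$ regime and to show that their ratio is governed entirely by a single concavity inequality. Throughout I write $s=1-\bar\theta$ for the fraction of admitted users, so that under the uniform distribution the $ns$ largest valuations carry (expected) mass $\int_{1-s}^{1}\theta\,\dif\theta=\tfrac12 s(2-s)$. By Theorem~\ref{opt:incom:uniinfty} and Corollary~\ref{asymptoticlinear}, together with the asymptotic optimality $\Pi_U\sim\Pi_D$ established there, it suffices to compare the complete-information profit $E_\theta(\Pi)$ against the uniform-pricing profit $\Pi_U$.

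First I would pin down the complete-information profit. Since $v$ is increasing and $\sum_{i=1}^{m}\theta_{(i)}$ is nondecreasing in $m$, for every realization and every admitted count $m$ we have $v(m)\sum_{i=1}^{m}\theta_{(i)}-mc\le v(n)\sum_{i=1}^{n}\theta_i$; taking expectations gives the clean bound $E_\theta(\Pi)\le v(n)\,E\big(\sum_i\theta_i\big)=\tfrac{n}{2}v(n)$. Admitting all users ($m=n$) already attains $\tfrac{n}{2}v(n)-nc$, and since $v(n)\rightarrow\infty$ the cost is of lower order. Hence $E_\theta(\Pi)\sim\tfrac{n}{2}v(n)$, i.e.\ under complete information the platform asymptotically admits everyone; no concentration argument is needed because the top-$m$ sum is dominated by the full sum, whose expectation is exactly $n/2$.

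Next I would write the uniform-pricing profit, derived exactly as in Section~\ref{UPSA} but with $v$ in place of $1-\rho^{m}$, namely $\max_{s}\big(ns(1-s)v(ns)-nsc\big)$. Factoring out $v(n)$ and setting $r_n(s)=v(ns)/v(n)$, the leading term is $n\,v(n)\max_{s}s(1-s)r_n(s)$. The crux is to bound $r_n(s)$: because $v$ is increasing, $r_n(s)\le 1$; because $v$ is concave with $v(0)=0$, $v(ns)=v\big(s\cdot n+(1-s)\cdot 0\big)\ge s\,v(n)$, so $r_n(s)\ge s$. These two inequalities are exactly the Metcalfe extreme ($r_n(s)=s$) and the Zipf-like extreme ($r_n(s)\rightarrow1$). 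Using $s^{2}(1-s)\le s(1-s)r_n(s)\le s(1-s)$ pointwise and maximizing the two envelopes,
\[
\tfrac{4}{27}=\max_{s}s^{2}(1-s)\ \le\ \max_{s}s(1-s)r_n(s)\ \le\ \max_{s}s(1-s)=\tfrac14,
\]
the envelope maxima being attained at $s=2/3$ and $s=1/2$. In particular the inner maximizer lies in a fixed compact subinterval of $(0,1)$, which lets me discard the cost term ($ns^\ast c$ is $o(n\,v(n))$). Combining the two profit estimates,
\[
\frac{E_\theta(\Pi)}{\Pi_U}=\frac{1+o(1)}{2\max_{s}s(1-s)r_n(s)}\in\Big[2,\tfrac{27}{8}\Big],
\]
so $\PoI\in[2,27/8]$.

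Finally I would identify the endpoints. For Zipf's law $v(m)=\log m$ one has $r_n(s)=1+\log s/\log n\rightarrow1$, whence $\max_s s(1-s)r_n(s)\rightarrow\tfrac14$ and $\PoI\rightarrow2$; for Metcalfe's law $v(m)=m$ one has $r_n(s)=s$ identically, the maximum is $\tfrac{4}{27}$, and $\PoI=\tfrac{27}{8}$. These agree with the explicit profits of Theorem~\ref{opt:incom:uniinfty} and Corollary~\ref{asymptoticlinear}, and for a power law $v(m)=m^{\a}$ the inner maximum equals $\tfrac{(\a+1)^{\a+1}}{(\a+2)^{\a+2}}$, giving $\PoI=\tfrac{(\a+2)^{\a+2}}{2(\a+1)^{\a+1}}$, which increases from $2$ to $27/8$ as $\a$ runs over $[0,1]$ and so interpolates the extremes. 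The substantive content is the concavity sandwich $s\le r_n(s)\le1$, which converts ``less concave'' directly into ``larger $\PoI$''; I expect the only genuine work to be the routine verifications that the cost term and the deviation of the maximizer from the boundary are of lower order, so that the deterministic envelope optimization governs the limit.
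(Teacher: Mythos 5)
Your proposal is correct and follows essentially the same route as the paper's proof: both establish $E_\theta(\Pi)\sim\tfrac{n}{2}v(n)$ under complete information, reduce $\Pi_U$ to the deterministic optimization $n\,v(n)\max_s s(1-s)v(ns)/v(n)$, and then sandwich this maximum between $\tfrac{4}{27}$ and $\tfrac14$ using exactly the two inequalities $s\,v(n)\le v(ns)\le v(n)$ (concavity with $v(0)=0$, and monotonicity), identifying the endpoints via Metcalfe's linear case and the $\log$ case. Your $r_n(s)$ notation and the power-law interpolation $\PoI=\tfrac{(\a+2)^{\a+2}}{2(\a+1)^{\a+1}}$ are nice cosmetic additions, but the substance matches the paper's argument.
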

The proof is given in Appendix \ref{poiuproof}. As the utility function becomes more concave (from $m$ in Metcalfe's law to $log(m)$ in Zipf's law), the profit loss due to lack of information decreases since the network externality  decreases and there is less consumer surplus to be transformed to platform's profit. This holds true for a general
cumulative distribution function $F$.

\section{Pricing  Extension to A Two-sided Market}\label{tsm}
In this section, we include another group/type of users to the platform, who are simply consumers and do not contribute to the network externalities .
Denote the set of original users (both contributors and consumers) as $N_1=\{1,2,\cdots,n_1\}$, and the new user set by $N_2=\{n_1+1,\cdots,n_1+n_2\}$. Within each set, we reorder users according to their service valuations such that $\theta_1>\cdots > \theta_{n_1}$ and $\theta_{n_1+1}> \cdots>\theta{n_1+n_2}$.
Note that  to which set a user belongs is public information as it is easy to verify whether a user can contribute or not. However, within each set, users' service valuations are still private information. As the two user sets' subscriptions affect each other, we wonder how the platform should jointly decide pricing schemes to the two sets of users. We also wonder if we can still approximate the two user groups' differentiated pricing via two uniform prices to achieve asymptotic optimality. The pricing schemes considered in Sections \ref{BUF} and \ref{UUF} can be similarly applied to the two sets of users. However, the asymptotical analysis becomes challenging as dimension increases.

Without much loss of generality, we apply  Metcalfe's law here, where  $\phi(m)=m$ and  $m$ only counts the original users in $N_1$ who can contribute.

%Then, user's utility, which is similar to \eqref{upu} with $b=1$, can be written down as
%\begin{equation}\label{uptwo}
%u_i=\theta_i\sum_{j\in N_1}\pi_j-p_i.
%\end{equation}
\subsection{Pricing under Complete Information  }
%\begin{align}
%&\max_{\{(\pi_i,\ p_i), {\ i\in{N}}\}}\ \sum_{j\in{N}}\pi_j (p_j-c)\nonumber\\
%&\text{subject\ to,\ } \pi_i\bigg(\theta_i\sum_{j\in N_1}\pi_j-p_i
%\bigg)\geq 0,\quad\text{for all } i\in{N},\label{opt:completeinfo2}
%\end{align}

Similar to Section \ref{combounded},
it is optimal  to leave a zero payoff to user $i$ of any user set
by setting the price to be
\begin{equation}%\label{eq:price_selection}
p_i^*(\pi)=\theta_i\sum_{j\in N_1}\pi_j.\nonumber
\end{equation}
 The platform's
optimization problem is
\begin{equation}\label{opt:completeinfo2_pi}
  \max_{\{\pi_i,\ {i\in{N}}\}}\ \sum_{i\in{N}}\pi_i \bigg(\theta_i\sum_{j\in N_1}\pi_j-c\bigg)\nonumber
\end{equation}
Similar to Lemma \ref{lem:ij},  at the optimality of problem \eqref{opt:completeinfo2_pi}, for any two users $i, j\in N_1$ or $i, j\in N_2$ with
  $\theta_i>\theta_j$, if user $j$ is included
 (i.e., $\pi_j=1$), then user $i$ should also be included
  ($\pi_i=1$).
It follows  that the platform will
select $m_1$ users with the largest service valuations in $N_1$ and $m_2$ users with the largest service valuations in $N_2$ and problem \eqref{opt:completeinfo2_pi}
reduces to
\begin{equation}\label{opt:CompInfo_two}
  \max_{m_1,m_2}\ m_1\bigg(\sum_{i=1}^{m_1} \theta_i+\sum_{i=n_1+1}^{n_1+m_2} \theta_i\bigg)-(m_1+m_2)c,
\end{equation}
which is an extension of \eqref{opt:com:log} for a single user set. We have the following theorem regarding the optimal solution to \eqref{opt:CompInfo_two}.
\begin{proposition}\label{two:cominfo:opt}
  Let $\bar m_2$ be the largest user number $m_2$ such that $n_1\theta_{n_1+m_2}\ge c$. Then if
  \[\bigg(\sum_{i=1}^{n_1} \theta_i+\sum_{i=n_1+1}^{n_1+\bar m_2} \theta_i\bigg)-\frac{n_1+\bar m_2}{n_1}c>0,\]
  then the optimal solution to \eqref{opt:CompInfo_two} is $m_1^*=n_1$ and $m_2^*=\bar m_2$. Otherwise, the optimal solution to \eqref{opt:CompInfo_two} is $m_1^*=0$ and $m_2^*=0$.
\end{proposition}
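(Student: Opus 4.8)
The plan is to show that only two points can be optimal, namely $(m_1,m_2)=(n_1,\bar m_2)$ and $(0,0)$, and then to compare their profits. Throughout I write $S_1(k)=\sum_{i=1}^{k}\theta_i$ and $S_2(k)=\sum_{i=n_1+1}^{n_1+k}\theta_i$, so that the objective in \eqref{opt:CompInfo_two} is $\Pi(m_1,m_2)=m_1S_1(m_1)+m_1S_2(m_2)-(m_1+m_2)c$. The guiding intuition is that since every contributor's price scales with $m_1$, the contributor contribution carries a product term $m_1S_1(m_1)$ that creates an all-or-nothing structure in $m_1$.

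First I would collapse the contributor dimension. Fix $m_2$ arbitrarily and view $\Pi(\cdot,m_2)$ as a function of $m_1$ alone. Writing $c'=c-S_2(m_2)$, the $m_1$-dependent part is $\tilde R(m_1)=m_1\big(S_1(m_1)-c'\big)$, and I claim it is maximized over $m_1\in\{0,\dots,n_1\}$ at an endpoint $m_1\in\{0,n_1\}$. Since $S_1$ is strictly increasing with $S_1(0)=0$, the quantity $S_1(m_1)-c'$ is increasing and changes sign at most once. Below the sign change $\tilde R(m_1)\le 0=\tilde R(0)$, while at or above it the forward difference $\tilde R(m_1+1)-\tilde R(m_1)=S_1(m_1)+(m_1+1)\theta_{m_1+1}-c'\ge S_1(m_1)-c'\ge 0$ shows $\tilde R$ is non-decreasing there; hence its maximum sits at $0$ or at $n_1$. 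I expect \emph{this} to be the main obstacle, because $\tilde R$ is neither concave nor monotone in general (its increments $S_1(m_1)+(m_1+1)\theta_{m_1+1}-c'$ need not be monotone in $m_1$), so a first-order argument is unavailable; the clean lower bound $\ge S_1(m_1)-c'$ on the increment, valid precisely once $S_1(m_1)\ge c'$, is what restores the single-crossing structure.

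Granting this endpoint property for \emph{every} fixed $m_2$, any global maximizer $(m_1^*,m_2^*)$ can be replaced, without lowering $\Pi$, by one with $m_1\in\{0,n_1\}$, since $m_1^*$ must itself maximize $\Pi(\cdot,m_2^*)$. It therefore suffices to optimize $m_2$ in each of the two cases. If $m_1=0$ then $\Pi(0,m_2)=-m_2c$ is maximized at $m_2=0$ with value $0$. If $m_1=n_1$ then $\Pi(n_1,m_2)=n_1S_1(n_1)-n_1c+\sum_{k=1}^{m_2}(n_1\theta_{n_1+k}-c)$, and because the increments $n_1\theta_{n_1+k}-c$ are decreasing in $k$ (the valuations in $N_2$ are sorted), the partial sum is maximized by admitting exactly the consumers with nonnegative increment, i.e.\ $m_2=\bar m_2=\max\{k:n_1\theta_{n_1+k}\ge c\}$, which is the stated definition of $\bar m_2$.

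Finally I would compare the two survivors: the optimum is $(n_1,\bar m_2)$ when $\Pi(n_1,\bar m_2)>0$ and $(0,0)$ otherwise, since these are the respective values of the two cases. Dividing $\Pi(n_1,\bar m_2)=n_1S_1(n_1)+n_1S_2(\bar m_2)-(n_1+\bar m_2)c$ by $n_1>0$ shows $\Pi(n_1,\bar m_2)>0$ is equivalent to $\big(\sum_{i=1}^{n_1}\theta_i+\sum_{i=n_1+1}^{n_1+\bar m_2}\theta_i\big)-\tfrac{n_1+\bar m_2}{n_1}c>0$, exactly the hypothesis of the proposition, which completes the argument. Boundary ties, where an increment equals zero, do not change the optimal value and may be broken arbitrarily.
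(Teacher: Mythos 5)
Your proof is correct, but it takes a genuinely different route from the paper's. The paper first rewrites the objective as $m_1 B(m_1,m_2)$ with
\[
B(m_1,m_2)=\bigg(\sum_{i=1}^{m_1}\theta_i+\sum_{i=n_1+1}^{n_1+m_2}\theta_i\bigg)-\frac{m_1+m_2}{m_1}c,
\]
i.e., it normalizes by $m_1$ to get a per-contributor profit, and then observes that $B$ is pointwise non-decreasing in $m_1$: raising $m_1$ both adds positive terms to the revenue sum and shrinks the cost share $\frac{m_1+m_2}{m_1}c$. Hence $B(m_1,m_2)\le B(n_1,m_2)\le B(n_1,\bar m_2)$, and the all-or-nothing conclusion falls out at once: if $B(n_1,\bar m_2)>0$ the objective $m_1B(m_1,m_2)$ is maximized at $(n_1,\bar m_2)$, and otherwise $B\le 0$ everywhere, forcing $m_1=0$ and then $m_2=0$. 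You instead keep the unnormalized objective, fix $m_2$, and prove endpoint optimality in $m_1$ via the forward-difference/single-crossing argument ($\tilde R(m_1+1)-\tilde R(m_1)\ge S_1(m_1)-c'$), then optimize $m_2$ separately at each endpoint and compare. Both proofs rest on the same two structural facts—the all-or-nothing character of $m_1$ and the sorted-threshold choice $\bar m_2=\max\{k:n_1\theta_{n_1+k}\ge c\}$—but the mechanisms differ: the paper's normalization makes monotonicity in $m_1$ transparent and gives a three-line argument, at the cost of a slightly informal treatment of $m_1=0$, where $\frac{m_1+m_2}{m_1}$ is undefined; your difference argument is longer and requires exactly the care you flag (since $\tilde R$ is neither concave nor monotone, a naive first-order or greedy argument would fail), but it stays with the original objective, handles $m_1=0$ cleanly, and isolates precisely where positivity of the running sum $S_1(m_1)-c'$ is used.
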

The proof is given in Appendix \ref{two:cominfo:optproof}. It is optimal to either include all the potential contributors in the platform for the maximum network externality  or include no users due to high cost. Note that if no user of the first set is selected, the network value is zero and the platform cannot attract any pure user from the other set.
\subsection{Pricing under Incomplete information}
%Under incomplete information, the platform does not know $\theta_i$'s exactly but their distributions. This distribution is known to all users and the platform. We assume $\theta_i$'s are independent and identically distributed samples from a distribution on $[0,1]$ with cumulative distribution function $F$. We will derive a differentiated pricing scheme and a uniform pricing scheme for the platform. We will compare these two different pricing schemes asymptotically.
\subsubsection{Differentiated Pricing Scheme}
 Under incomplete information, the platform will require each user $i$ of each type to declare his $\theta_i$ and then choose $p_i$'s and $\pi_i$'s to maximize its profit.
 We can similarly decide the differentiated pricing in \eqref{eq:P_Qb}  as  Proposition~\ref{lemma:optb} still applies, and the platform's
optimization problem can be written as
\begin{align*}
\Pi = &\int \max_{m_1,m_2}  m_1\bigg(\sum_{i=1}^{m_1}
  g(\theta_{(i)}) +\sum_{i=n_1+1}^{n_1+m_2}
  g(\theta_{(i)})\bigg) \\
  &-(m_1+m_2)c\,\dif F^n(\theta).
\end{align*}
%It is difficult to jointly optimize $m_1$ and $m_2$. There are no closed-form solutions.
\subsubsection{Uniform Pricing Scheme as Approximation}
Unlike the single user type case, in the two-sided market,  the platform sets different uniform prices for different user types. The price for type-1 users (dual-role) is $P_1$ and and the price for type-2 user (pure consumers) is $P_2$.
There is a unique   threshold for each type of users: $\bar\theta_1$ for type-1 and $\bar\theta_2$ for type 2 . Assume $\theta_i$ is uniformly distributed in $[0,1]$.
Similar to \eqref{uniformprice}, for a type-1 user $i\in N_1$ with  $\theta_i=\bar\theta_1$, we have
$$\bar\theta_1(1-\bar\theta_1)n_1=P1.$$
and for a type 2 user user $i\in N_2$ with $\theta_i=\bar\theta_2$, we have
$$\bar\theta_2(1-\bar\theta_1)n_1=P2.$$
The platform's
optimization problem can be written as
\begin{align*}
\max_{\bar\theta_1,\bar\theta_2\in[0,1]}&n_1(1-\bar\theta_1)(\bar\theta_1(1-\bar\theta_1)n_1)n_2-c)\\
&+n_2(1-\bar\theta_2)(\bar\theta_2(1-\bar\theta_1)n_1-c).
\end{align*}
Assume $n_1/n_2=k$ where $k$ is a positive constant,  when $n_1$ and $n_2$ or simply $n$ go to infinity,
%\begin{align*}
%\max_{\bar\theta_1,\bar\theta_2\in[0,1]}&kn_2(1-\bar\theta_1)(\bar\theta_1(1-\bar\theta_1)kn_2)n_2)\\
%&+n_2(1-\bar\theta_2)(\bar\theta_2(1-\bar\theta_1)kn_2).
%\end{align*}
we have the following proposition regarding the optimal uniform prices and maximal profits.

  \begin{theorem}\label{two:UniPri}
In two-sided market, as $n\rightarrow\infty$,  the two
 optimal uniform prices under incomplete information
are
\begin{equation*}
P_1=\begin{cases}
0&\mbox{ if } \frac{n_1}{n_2}<\frac{1}{4},\\
\frac{4n_1-3n_2+2\sqrt{4n_1^2+3n_1n_2}}{36}&\mbox{ if } \frac{n_1}{n_2}\ge \frac{1}{4},
\end{cases}
\end{equation*}
\begin{equation*}
P_2=\begin{cases}
\frac{n_1}{2}&\mbox{ if } \frac{n_1}{n_2}<\frac{1}{4},\\
\frac{n_1n_2}{4\sqrt{4n_1^2+3n_1n_2}-8n_1}&\mbox{ if } \frac{n_1}{n_2}\ge \frac{1}{4},
\end{cases}
\end{equation*}
yielding the optimal profit
\begin{equation}\label{maxprofit:two}
\Pi_U=\begin{cases}
\frac{4n_1^2+6n_2^2+7n_1n_2+(4n_1+3n_2)\sqrt{4n_1^2+3n_1n_2}}{108}&\mbox{ if } \frac{n_1}{n_2}<\frac{1}{4},\\
\frac{1}{4}n_1n_2&\mbox{ if } \frac{n_1}{n_2}\ge \frac{1}{4}.
\end{cases}
\end{equation}
 %$$\bar\theta_1^*=\frac{2}{3}-\frac{\sqrt{k(4k+3)}}{6k},\qquad\bar\theta_2^*=\frac{1}{2}.$$
% $$\bar\theta_1^*=0,\qquad \bar\theta_2^*=\frac{1}{2}.$$
$P_1^*$ decreases with  $n_1/n_2$, $P_2^*$ increases with $n_1/n_2$, and $\Pi_U$ increases with  $n_1/n_2$.
As $n\rightarrow\infty$, the profit achieved by the differentiated pricing scheme is $\Pi_D\sim\eqref{maxprofit:two}$, that is,
uniform pricing is asymptotically optimal.
\end{theorem}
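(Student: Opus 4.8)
The plan is to handle the two halves of the statement separately: the closed-form prices and profit come from a deterministic two-variable optimization of the uniform-pricing objective, while the asymptotic optimality $\Pi_D\sim\eqref{maxprofit:two}$ requires a concentration argument on the differentiated-pricing program. First I would substitute the two indifference conditions $P_1=\bar\theta_1(1-\bar\theta_1)n_1$ and $P_2=\bar\theta_2(1-\bar\theta_1)n_1$ into the platform's objective and drop the $O(n)$ cost terms, which are dominated by the $O(n^2)$ externality revenue as $n\rightarrow\infty$. This reduces the problem to
\begin{equation*}
\max_{\bar\theta_1,\bar\theta_2\in[0,1]} n_1^2\,\bar\theta_1(1-\bar\theta_1)^2+n_1n_2\,(1-\bar\theta_1)\,\bar\theta_2(1-\bar\theta_2),
\end{equation*}
in which $\bar\theta_2$ appears only through the concave factor $\bar\theta_2(1-\bar\theta_2)$. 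Since $\bar\theta_2$ is decoupled, I would optimize it first to get $\bar\theta_2^*=\tfrac12$ and substitute $\bar\theta_2(1-\bar\theta_2)=\tfrac14$.

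The remaining univariate objective in $x=\bar\theta_1$ is $n_1^2x(1-x)^2+\tfrac14 n_1n_2(1-x)$, whose derivative factors neatly as $n_1^2(1-x)(1-3x)-\tfrac14 n_1n_2$. Its sign at $x=0$ equals that of $n_1-\tfrac14 n_2$, which is exactly what splits the analysis at the ratio $n_1/n_2=\tfrac14$: for $n_1/n_2<\tfrac14$ the objective is already decreasing at the boundary, forcing $\bar\theta_1^*=0$ (so $P_1=0$ and all contributors are admitted), whereas for $n_1/n_2\ge\tfrac14$ the maximizer is interior and solves the quadratic $3x^2-4x+1=n_2/(4n_1)$. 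Taking the root in $[0,\tfrac13]$ and writing $S=\sqrt{4n_1^2+3n_1n_2}$ gives the stated $P_1^*$ and $P_2^*$ after back-substitution into the indifference conditions. For the profit I would use the first-order condition to eliminate the quadratic term via $\tfrac14 n_1n_2=n_1^2(1-x^*)(1-3x^*)$, which collapses the objective to the compact form $\Pi_U=n_1^2(1-x^*)^2(1-2x^*)$; expanding this in terms of $S$ yields the closed form in \eqref{maxprofit:two}. The three monotonicity claims then follow by differentiating the closed forms with respect to $r=n_1/n_2$, a routine sign check on rational-plus-radical expressions.

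For the asymptotic optimality of differentiated pricing, I would start from the fact that for the uniform distribution the virtual value is $g(\theta)=2\theta-1$, so (again dropping the $O(n)$ cost) $\Pi_D$ is the expectation of $\max_{m_1,m_2}m_1\bigl(\sum_{i\le m_1}g(\theta^{(1)}_{(i)})+\sum_{i\le m_2}g(\theta^{(2)}_{(i)})\bigr)$ over the two samples of order statistics, where $\theta^{(k)}_{(i)}$ is the $i$-th largest type-$k$ valuation. Writing $m_1=\alpha n_1$ and $m_2=\beta n_2$ and using the order-statistic concentration already exploited in Theorems \ref{opt:incom:asymb} and \ref{opt:incom:uniinfty}, the top-fraction virtual-value sums satisfy $\sum_{i\le\alpha n_1}g(\theta^{(1)}_{(i)})\rightarrow n_1\alpha(1-\alpha)$ and similarly for type 2, so the random inner maximum concentrates around the deterministic program $\max_{\alpha,\beta}n_1^2\alpha^2(1-\alpha)+n_1n_2\alpha\beta(1-\beta)$. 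Under the identification $\alpha=1-\bar\theta_1$, $\beta=1-\bar\theta_2$ this limiting program is \emph{identical} to the uniform-pricing objective above, so its optimum equals $\Pi_U$ to leading order and asymptotic optimality follows.

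I expect this last step to be the main obstacle. Unlike the single-type setting, the type-2 contribution is multiplied by $m_1=\alpha n_1$, so the two fractions are coupled through the objective, and I must establish the concentration of the partial sums \emph{uniformly} in $(\alpha,\beta)$ and justify the interchange of $\max$ and $E$ in the presence of this product coupling before identifying the maximizer with $(\bar\theta_1^*,\bar\theta_2^*)$ found above.
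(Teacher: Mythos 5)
Your first half is correct, and it is essentially the computation the paper compresses into ``it is straightforward to check'' plus ``direct calculation'': decoupling $\bar\theta_2$ (giving $\bar\theta_2^*=\tfrac12$), the univariate FOC $n_1^2(1-x)(1-3x)=\tfrac14 n_1n_2$ with the root in $[0,\tfrac13]$, and the collapse $\Pi_U=n_1^2(1-x^*)^2(1-2x^*)$ all check out. In fact, carried to the end your derivation exposes a typo in the statement: the two cases of \eqref{maxprofit:two} are printed swapped --- the boundary solution $\bar\theta_1^*=0$ (i.e.\ $n_1/n_2<\tfrac14$) gives $\Pi_U=\tfrac14 n_1n_2$, while the surd expression is the interior-case profit for $n_1/n_2\ge\tfrac14$; this labeling is consistent with the (correctly labeled) formulas for $P_1^*,P_2^*$ and with the PoI computation in the proof of Theorem~\ref{two:poi}, so your claim that your closed form ``yields \eqref{maxprofit:two}'' is right only after the two cases are interchanged.

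The genuine gap is in the second half, and it sits exactly where you flag it: the upper bound on $\Pi_D$ \emph{is} the theorem, not a final detail. Saying the random inner maximum ``concentrates around the deterministic program'' cannot be justified by the pointwise, fixed-$(\alpha,\beta)$ Chebyshev bounds used in Theorems~\ref{opt:incom:asymb} and \ref{opt:incom:uniinfty}: since $E[\max_{\alpha,\beta}(\cdot)]\ge\max_{\alpha,\beta}E[\cdot]$, pointwise concentration gives the wrong-direction inequality for free and nothing in the needed direction, while a union bound over the $\Theta(n_1n_2)$ lattice of $(m_1,m_2)$ pairs fails because each Chebyshev bound is only $O(1/n_2)$. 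The paper closes this hole not with uniform concentration but with argmax localization: it introduces the coordinate-wise increments $h_1(m_1,m_2)$ and $h_2(m_1,m_2)$ of the objective (Lemma~\ref{two:lemma}), computes their means and variances from Lemma~\ref{orderstat}, and applies Chebyshev \emph{conditionally on one coordinate at a time} --- for every fixed $a_2$, $a_1(\theta)$ lies within $\epsilon_1$ of a curve $f_1(a_2)$ with probability $1-O(1/n_2)$, and for every fixed $a_1$, $a_2(\theta)$ lies within $\epsilon_2$ of $\tfrac12$ with probability $1-O(1/n_2)$; intersecting these events places the maximizer in a region $R(\epsilon_1,\epsilon_2)$ around $(\eta_1,\eta_2)$. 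This conditioning is precisely what tames the product coupling you worry about. The expectation is then split: on the rare event the integrand is at most $k(k+1)n_2^2$ (using $g\le1$), contributing $O(n_2)$; on the typical event the maximum over the small box is bounded deterministically by monotone partial sums such as $\sum_{i\le(\eta_1-\epsilon)n_1}g(\theta_{(i)})+2\epsilon n_1$, whose expectation gives the leading term. To complete your proof you would need either to reconstruct this localization machinery or to prove a genuinely uniform, sup-over-$(\alpha,\beta)$ concentration bound (e.g.\ via empirical-process-type inequalities), which is a strictly stronger tool than anything you cite.
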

The proof is given in Appendix \ref{two:UniPriproof}.
When $n_1/n_2$ is small (less than 1/4), the platform platform's profit comes mostly from the type-2 pure users and desires the maximum network externalities contributed by the type-1 users. Thus, it
charges zero price to motivate all type-1 users to contribute to the network externalities.
As $n_1/n_2$ increases, the fraction of potential contributors increases, the platform with larger network externalities can charge more from the pure users of type-2, while keeping more contributors of type-1 at a lower price. Thus, $P_1^*$ decreases and $P_2^*$ increases with $n_1/n_2$.

\subsubsection{Price of Information}

%Since $\frac{\Pi_U}{\Pi_D}\rightarrow1$ as $n_2\rightarrow\infty$ ,we can still define price of information by \eqref{poidef}
%We expect the profit under complete information is larger, i.e., $\PoI\le1$. For tractability, we still assume that $\theta_i \sim U[0,1]$. Then
Similar to Section \ref{sectionpoi}, we can define price of information by \eqref{poidef} and straightforward calculation gives
the following theorem.
\begin{theorem}\label{two:poi}
In the two-sided market,
 the price of information is
\[\label{maxprofit}
\PoI=\begin{cases}
\frac{54(n_1^2+n_1n_2)}{4n_1^2+6n_2^2+7n_1n_2+(4n_1+3n_2)\sqrt{4n_1^2+3n_1n_2}}&\mbox{ if } \frac{n_1}{n_2}<\frac{1}{4},\\
\frac{2(n_1+n_2)}{n_2}&\mbox{ if } \frac{n_1}{n_2}\ge \frac{1}{4}.
\end{cases}
\]

Overall, price of information decreases as $n_1/n_2$ increases.
\end{theorem}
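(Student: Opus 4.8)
The plan is to evaluate the two ingredients of the ratio in \eqref{poidef} separately and then divide. The denominator $\Pi_U$ is already supplied by Theorem~\ref{two:UniPri} in both regimes $n_1/n_2<1/4$ and $n_1/n_2\ge 1/4$, so the substantive work is to determine the asymptotics of the complete-information profit $E_\theta(\Pi)$ as $n\to\infty$; dividing the two then produces the claimed piecewise expression, and a final monotonicity check closes the argument.

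First I would compute $E_\theta(\Pi)$ from Proposition~\ref{two:cominfo:opt}. That proposition says the complete-information optimum either admits nobody or admits all $n_1$ contributors, $m_1^*=n_1$, together with the $\bar m_2$ type-2 users surviving the threshold $n_1\theta_{n_1+m_2}\ge c$. The starting observation is that this threshold is $c/n_1\to 0$, so with probability tending to one essentially every type-2 user qualifies and $\bar m_2=n_2(1-o(1))$; moreover, since $\sum_i\theta_i$ grows like $n$ while the admission condition compares it against a quantity of order $c$, for every fixed ratio $k=n_1/n_2$ the ``admit everyone'' branch is the binding one. Substituting $m_1^*=n_1$ and $m_2^*\approx n_2$ into \eqref{opt:CompInfo_two} gives
\begin{equation*}
E_\theta(\Pi)=n_1\, E\!\left(\sum_{i=1}^{n_1}\theta_i+\sum_{i=n_1+1}^{n_1+\bar m_2}\theta_i\right)-(n_1+\bar m_2)c .
\end{equation*}
Because the $\theta_i$ are i.i.d.\ uniform on $[0,1]$, the included partial sums concentrate at $n_1/2$ and $n_2/2$ respectively, while the cost term is only $O(n)$, so the leading-order profit is
\begin{equation*}
E_\theta(\Pi)\sim n_1\cdot\frac{n_1+n_2}{2}=\frac{n_1^2+n_1n_2}{2},
\end{equation*}
a term of order $n^2$ that dominates the cost entirely.

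With this numerator in hand, dividing by the two cases of $\Pi_U$ from Theorem~\ref{two:UniPri} is routine algebra: when $n_1/n_2\ge 1/4$ one obtains $\PoI=\frac{(n_1^2+n_1n_2)/2}{n_1n_2/4}=\frac{2(n_1+n_2)}{n_2}$, and when $n_1/n_2<1/4$ the factor $108/2=54$ turns $E_\theta(\Pi)/\Pi_U$ into the stated ratio carrying $\sqrt{4n_1^2+3n_1n_2}$ in its denominator. To settle the behavior of the PoI in $n_1/n_2$, I would reparametrize by $k=n_1/n_2$, write each branch as a function of $k$ alone (the common $n_2^2$ factors cancel), and analyze the sign of the derivative on each piece together with the value at the junction $k=1/4$.

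The step I expect to be the main obstacle is the rigorous asymptotic evaluation of $E_\theta(\Pi)$, not the subsequent division. Two points need care: (i) controlling the random admission count $\bar m_2$, in particular showing that the event in which the Proposition~\ref{two:cominfo:opt} condition fails, or in which a non-negligible fraction of type-2 valuations fall below $c/n_1$, contributes only $o(n^2)$ to the expectation; and (ii) confirming that replacing the random order-statistic sums by their means $n_1/2$ and $n_2/2$ perturbs the $O(n^2)$ leading term by at most $o(n^2)$, so that both the $O(n)$ cost term and the $O(n^{3/2})$ fluctuations are asymptotically irrelevant. Once the leading term $\tfrac12 n_1(n_1+n_2)$ is secured, the two PoI formulas in \eqref{maxprofit} and their comparison across the regime boundary follow directly.
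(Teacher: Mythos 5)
Your route is the same as the paper's: evaluate the complete-information profit asymptotically, divide by the $\Pi_U$ of Theorem~\ref{two:UniPri}, and finish with a monotonicity check. Your numerator $E_\theta(\Pi)\sim\tfrac12 n_1(n_1+n_2)$ is correct, and is in fact easier to justify than you anticipate: since every $\theta_i\ge 0$, once the $O(n)$ cost is dropped the objective $m_1\bigl(\sum_{i=1}^{m_1}\theta_i+\sum_{i=n_1+1}^{n_1+m_2}\theta_i\bigr)$ is maximized by admitting everyone, so linearity of expectation gives the leading term directly and no concentration or control of $\bar m_2$ is needed. (The paper does exactly this in one line, though its proof writes $k(k+1)n_2^2$, dropping the factor $\tfrac12$ that its final ratios silently restore.)

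The genuine gap is the step you defer to the end. The derivative check you propose cannot close the argument; it refutes the claim. The branch $\PoI=2(n_1+n_2)/n_2=2(1+k)$ is manifestly \emph{increasing} in $k=n_1/n_2$, and the square-root branch, written correctly, rises from $5/2$ at $k=\tfrac14$ toward $27/8$ as $k\to\infty$, consistent with the one-sided Metcalfe value in Proposition~\ref{poiu}. So $\PoI$ increases with $n_1/n_2$: the theorem's last sentence is false as stated, the paper's proof merely asserts ``the derivative \dots is less than zero'' without computing it, and the paper's own discussion after the theorem (and its abstract and conclusion) states the opposite, increasing, behaviour. A blind proof must flag this rather than assume the check will succeed. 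Relatedly, the case pairing you inherit from the statement of Theorem~\ref{two:UniPri} is the swapped one: the prices given there for $n_1/n_2<\tfrac14$, namely $P_1=0$ and $P_2=n_1/2$, yield revenue exactly $n_2\cdot\tfrac12\cdot\tfrac{n_1}{2}=\tfrac14 n_1n_2$, so $\Pi_U=\tfrac14 n_1n_2$, and hence $\PoI=2(n_1+n_2)/n_2$, belongs to the regime $n_1/n_2<\tfrac14$, not $n_1/n_2\ge\tfrac14$. This is precisely the pairing the paper's own proof uses, assigning $\PoI=2(k+1)$ to $k<\tfrac14$ and $\frac{54k(k+1)}{(2k+\sqrt{k(4k+3)})(3+2k+\sqrt{k(4k+3)})}$ to $k\ge\tfrac14$. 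A quick sanity check would have exposed the inconsistency you reproduce: the statement's two branches disagree at the junction $k=\tfrac14$ (about $1.41$ versus $5/2$), which is impossible since both $E_\theta(\Pi)$ and $\Pi_U$ are continuous in $k$. In short, your asymptotics and division mirror the paper, but the proposal as written would reproduce the statement's swapped and non-monotone formulas rather than prove a correct theorem.
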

The proof is given  in Appendix \ref{two:poiproof}. As $n_1/n_2$ increases, the fraction of potential contributors increases, the platform under incomplete information still needs to provide price discounts as incentives. As a result, the PoI or profit loss due to lack of information increases.

\section{Simulation Results}\label{simulation}
%\subsection{Accuracy of Asymptotic Approximation}
%Figure~\ref{fig:ProfitRatioB} shows that the average profit ratio between the uniform and differentiated pricing mechanisms  and price of information when given bounded utility model and users' valuations follow the uniform distribution. In this figure, the average profit under different pricing schemes are obtained by averaging 1 million sample data  with different $\theta$ realizations. We observe as the $n$ increases, the gap between the two cases decreases  and the uniform pricing will perform very close to the differentiated pricing. We also observe as $b$ increases, price of information decreases.
%\begin{figure}[hh]
%\centering
%\subfigure[Bounded Utility Model]{
%\begin{minipage}[t]{0.46\linewidth}\label{fig:ProfitRatioB}
%\centering
%\includegraphics[width=1.6in]{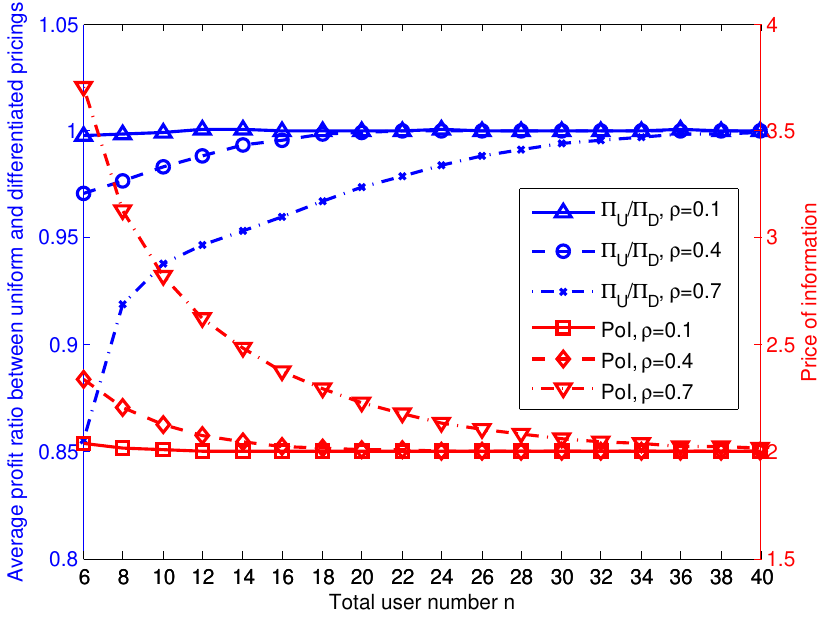}
%\end{minipage}
%}
%\subfigure[Unbounded Utility Model]{
%\begin{minipage}[t]{0.46\linewidth}\label{fig:ProfitRatioU}
% \centering
%\includegraphics[width=1.6in]{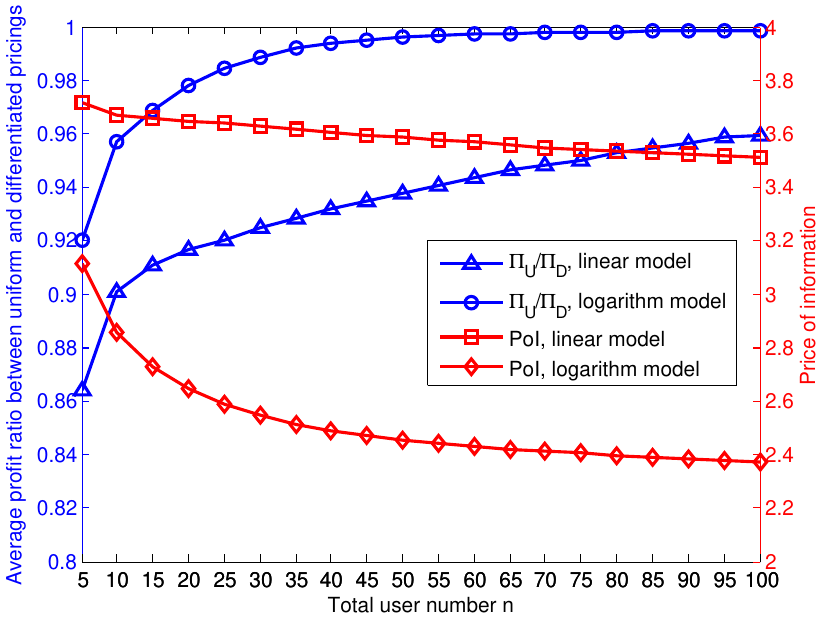}
%\end{minipage}
%\caption{Average profit ratio between uniform and differentiated pricing and price of information.}
%\label{profitratio}
%\end{figure}
\begin{figure}[hh]
	\centering
	\includegraphics[width=2in]{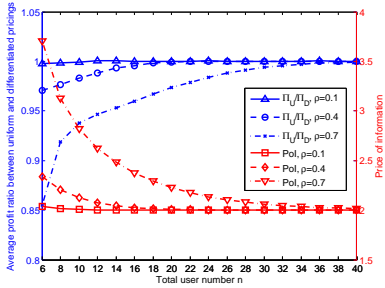}
	\caption{Average profit ratio between uniform and differentiated pricing and price of information for bounded utility model.} \label{fig:ProfitRatioB}
\end{figure}

We plot ratios of average profits under different pricing schemes in Figure~\ref{fig:ProfitRatioB}.\ for bounded utility model and Figure~\ref{fig:ProfitRatioU}.\ for unbounded utility model, by averaging 1 million sample data  with different $\theta$ realizations.

Figure~\ref{fig:ProfitRatioB}.\ shows that the average profit ratio
between the uniform and differentiated pricing schemes  increases with user number. This is consistent with Theorem~\ref{opt:incom:asymb}, which shows uniform pricing scheme is asymptotically optimal as $n$ goes to infinity. The convergence rate at which $\Pi_U/\Pi_D$ approaches 1 decreases with $\rho$. As service
coverage contributed by an individual user increases ($\rho$ decreases), total service converges to 100\% faster and hence uniform pricing scheme approaches  optimality faster.
Figure~\ref{fig:ProfitRatioB}.\ also shows PoI as an decreasing function of $n$, approaches to 2 as in Proposition~\ref{poiexpoprop}.
PoI decreases with $n$ since the information of users' valuation distribution helps pricing design of the platform more as $n$ increases.
The convergence rate at which PoI approaches 2 increases as $\rho$ decreases is also due to the fact that total service converges to 100\% faster as $\rho$ decreases.

\begin{figure}[hh]
	\centering
	\includegraphics[width=2in]{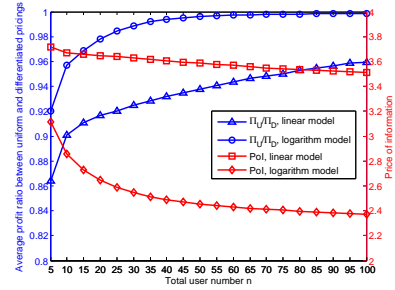}
	\caption{Average profit ratio between uniform and differentiated pricing and price of information for unbounded utility model.} \label{fig:ProfitRatioU}
\end{figure}

Figure~\ref{fig:ProfitRatioU}.\ shows that the average profit ratio
between the uniform and differentiated pricing mechanisms  increases with user number. This is consistent with Theorem~\ref{opt:incom:uniinfty} and Corollary~\ref{asymptoticlinear}, which show uniform pricing scheme is asymptotically optimal as $n$ goes to infinity. Logarithm utility model converges faster than linear utility model.
This is because network externalities grow faster in linear utility model than logarithm utility model and hence
uniform pricing cause greater profit loss in linear utility model than logarithm utility model.
Figure~\ref{fig:ProfitRatioU}.\ also shows PoI as an decreasing function of $n$ approaches to 2 for logarithm utility model  and 27/8 (=3.375) for linear utility model. This is consistent with Proposition~\ref{poiu}.

\section{Conclusion}\label{conclusion}
  This paper studies how a peer-to-peer sharing platform should price its service  to maximize its profit.
  We consider both bounded and unbounded user utility models.
  For both bounded and unbounded user utility models, we analyze the optimal pricing schemes to select heterogeneous users in the platform under complete and incomplete information of users' service valuations.  The  profit loss  due to lack of information becomes greater as the utility function becomes less concave. We show that the complicated differentiated pricing scheme under incomplete information can be replaced  by a single uniform price with asymptotic optimality. We also extend our pricing schemes  to a  two-sided market. Platform may charge zero price to the original group of users in order to attract the pure user group. Uniform pricing scheme is still asymptotically optimal as user number goes to infinity and price of information increases as the fraction of original users decreases.

\newpage

\appendix
\section{Proof of Proposition \ref{lemma:optb}}\label{lemma:optbproof}
\begin{proof}
Given \eqref{g}, \eqref{V}, and \eqref{P}, incentive compatibility constraint \eqref{constraint2b} can be rewritten as follows,
\begin{equation}\label{VP}
\theta_iV_i(\theta_i)-P_i(\theta_i)\ge\theta_iV_i(\theta_i^\prime)-P_i(\theta_i^\prime)
\end{equation}
for all $i$ and $\theta_i^\prime\in[0,1]$.
Assume $V_i(\theta_i)$ is non-decreasing in $\theta_i$, and
\[
P_i(\theta_i)=\theta_i V_i(\theta_i)-\int_0^{\theta_i}V_i(\eta)d \eta.
\]
Straightforward calculation shows that \eqref{VP} is satisfied.

Now assume \eqref{VP} holds for all $i$ and $\theta_{i}^\prime\in[0,1]$. Fix arbitrary $i$,  \eqref{VP} holds
for any $\theta_i\in[0,1]$ and $\theta_i^\prime\in[0,1]$. It follows that, for any $x,y\in[0,1]$
\begin{equation}\label{xy}
xV_i(x)-P_i(x)\ge xV_i(y)-P_i(y),
\end{equation}
\begin{equation}\label{yx}
yV_i(y)-P_i(y)\ge yV_i(x)-P_i(x).
\end{equation}
Adding \eqref{xy} and \eqref{yx} yields
\[(x-y)(V_i(x)-V_i(y))\ge0.\]
Thus, $V_i(\theta_i)$ is non-decreasing in $\theta_i$. Assume $x\ge y$, by definition of Riemann integral, it follows from \eqref{xy} that
\[xV_i(x)-\int_{0}^{x}V_i(\eta)\dif \eta\ge P_i(x),\]
it follows from \eqref{yx} that
\[-xV_i(x)+\int_{0}^{x}V_i(\eta)\dif \eta\ge -P_i(x).\]
Then,
\[
P_i(\theta_i)=\theta_i V_i(\theta_i)-\int_0^{\theta_i}V_i(\eta)d \eta.
\]
Thus, the first part of the theorem follows.

Using (\ref{eq:P_Qb}), the platform's maximal profit
$\Pi_D$ can be written as
\begin{align}
\Pi_D &=  \max_{\pi_i(\cdot),\ p_i(\cdot)} E_\theta \bigg(\sum_{i=1}^n \pi_i(\theta) (p_i(\theta)-c) \bigg)\nonumber\\
  &=\max_{\pi_i(\cdot)}\int \sum_{i=1}^n \pi_i(\theta) \bigg(g(\theta_i)  (1-\rho^{\sum_{j}\pi_j(\theta)}) - c \bigg) \dif F^n (\theta)\nonumber \\
  &=\max_{m(\cdot)} \int \sum_{i=1}^{m(\theta)} \bigl(g(\theta_{(i)}) (1-\rho^{m(\theta)})  -m(\theta)c\bigr) \dif F^n(\theta)\nonumber\\
  &=\max_{m(\cdot)} \int \bigg((1-\rho^{m(\theta)}) \sum_{i=1}^{m(\theta)}
  g(\theta_{(i)}) - m(\theta)c\bigg) \dif F^n(\theta)\nonumber\\
  &= \int \max_{m\in N} \bigg((1-\rho^m)\sum_{i=1}^{m}
  g(\theta_{(i)}) -m c\bigg) \dif F^n(\theta).\nonumber%\label{eq:m}
\end{align}
Thus, the second part of the theorem follows.
\end{proof}
\section{Proof of Theorem \ref{opt:incom:asymb}}\label{opt:incom:asymbproof}
We first prove some lemmas.
\begin{lemma}\label{orderstat}
  Let $\theta_1,\dotsc,\theta_n$ be i.i.d.\ $U[0,1]$ and let
  $\theta_{(1)},\dotsc,\theta_{(n)}$ be their order statistics,
  $\theta_{(1)}\geq\cdots\geq \theta_{(n)}$. Then
\begin{align*}
   E[\theta_{(i)}] &= 1-\frac{i}{n+1},\\[4pt]
   \var[\theta_{(i)}] &=\frac{i (n+1-i)}{(n+1)^2 (n+2)},\\[4pt]
 \cov[\theta_{(i)},\theta_{(j)}] &=\frac{i(n+1-j)}{(n+1)^2 (n+2)},
 \quad i<j.
\end{align*}
\end{lemma}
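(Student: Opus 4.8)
The plan is to reduce everything to the standard \emph{ascending} order statistics and then read off the three formulas by a single index reversal. Since the $\theta_{(i)}$ are arranged in decreasing order, the $i$-th largest value is the $(n+1-i)$-th smallest; writing $U_{(1)}\le\cdots\le U_{(n)}$ for the ascending order statistics of the same sample, I would use $\theta_{(i)}=U_{(n+1-i)}$. It therefore suffices to establish the classical identities $E[U_{(k)}]=\tfrac{k}{n+1}$, $\var[U_{(k)}]=\tfrac{k(n+1-k)}{(n+1)^2(n+2)}$, and $\cov[U_{(k)},U_{(l)}]=\tfrac{k(n+1-l)}{(n+1)^2(n+2)}$ for $k<l$, and then substitute $k=n+1-i$ (and $k=n+1-j$, $l=n+1-i$ in the covariance, noting that $i<j$ forces $n+1-j<n+1-i$) to recover the stated expressions.

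For the mean and variance I would work from the marginal density of a single order statistic. A direct multinomial/counting argument---one of the $n$ variables lands near $u$, exactly $k-1$ of the rest fall below it and $n-k$ above---shows that $U_{(k)}$ has density $\tfrac{n!}{(k-1)!(n-k)!}u^{k-1}(1-u)^{n-k}$ on $[0,1]$, i.e.\ $U_{(k)}\sim\mathrm{Beta}(k,n+1-k)$. The first two moments then follow from the Beta integral $\int_0^1 u^{a-1}(1-u)^{b-1}\dif u=\tfrac{(a-1)!\,(b-1)!}{(a+b-1)!}$: computing $E[U_{(k)}]$ and $E[U_{(k)}^2]$ this way and forming $\var[U_{(k)}]=E[U_{(k)}^2]-E[U_{(k)}]^2$ yields the two formulas after routine simplification.

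The covariance is the one genuinely new computation and is where I expect the main effort. The direct route uses the joint density of $(U_{(k)},U_{(l)})$, namely $\tfrac{n!}{(k-1)!(l-k-1)!(n-l)!}u^{k-1}(v-u)^{l-k-1}(1-v)^{n-l}$ on $0<u<v<1$, and evaluates $E[U_{(k)}U_{(l)}]$ by an iterated integral that collapses to a product of Beta functions before one subtracts $E[U_{(k)}]E[U_{(l)}]$; this is correct but bookkeeping-heavy. A cleaner alternative, which I would prefer, is the \emph{spacings} representation. Writing $D_1,\dots,D_{n+1}$ for the consecutive gaps (with $D_1=U_{(1)}$ and $D_{n+1}=1-U_{(n)}$, giving $n+1$ gaps that sum to $1$), the $D_r$ are exchangeable, so they share a common variance $s^2$ and a common pairwise covariance, and the constraint $\sum_r D_r=1$ forces that covariance to equal $-s^2/n$. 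Since $U_{(k)}=\sum_{r=1}^{k}D_r$, bilinearity of covariance reduces both $\var[U_{(k)}]$ and $\cov[U_{(k)},U_{(l)}]$ to counting diagonal versus off-diagonal terms ($k$ diagonal terms against $k(k-1)$ or $k(l-1)$ off-diagonal ones), and the only quantitative input is $s^2=\var[D_1]$, obtained from $D_1\sim\mathrm{Beta}(1,n)$. The main obstacle is thus confined to this single spacing-variance evaluation together with the careful term-counting, after which all three identities fall out simultaneously and the index reversal completes the proof.
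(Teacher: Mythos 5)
Your proposal is correct, and the computations check out: under the index reversal $\theta_{(i)}=U_{(n+1-i)}$ the covariance substitution $k=n+1-j$, $l=n+1-i$ does reproduce $\frac{i(n+1-j)}{(n+1)^2(n+2)}$, and in the spacings route, with $s^2=\var[D_1]=\frac{n}{(n+1)^2(n+2)}$ and common pairwise covariance $-s^2/n$, one gets $\cov[U_{(k)},U_{(l)}]=ks^2-k(l-1)\frac{s^2}{n}=\frac{k(n+1-l)}{(n+1)^2(n+2)}$ as claimed. However, your preferred route is genuinely different from the paper's: the paper's entire proof is the remark that the lemma follows by calculation with the joint density of $(\theta_{(i)},\theta_{(j)})$ --- exactly the direct iterated-Beta-integral computation that you describe and then set aside as bookkeeping-heavy. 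Your spacings argument buys real economy: exchangeability of $(D_1,\dotsc,D_{n+1})$ together with the constraint $\sum_r D_r=1$ forces the pairwise covariance to be $-s^2/n$, so a single $\mathrm{Beta}(1,n)$ variance evaluation plus term-counting delivers the mean, variance, and covariance formulas simultaneously, with no double integral and far less room for algebraic slips. What it costs is one structural input you assert without proof: that the spacings are exchangeable. This is classical but not free --- it follows from the fact that the spacings of $n$ i.i.d.\ uniforms are jointly uniform (Dirichlet with all parameters $1$) on the simplex, itself a short change-of-variables computation from the $n!$ joint density of the order statistics --- and you should include that step to make the argument self-contained. The paper's route needs no such structural fact, only patience with the Beta integrals; yours is shorter and cleaner once the exchangeability of spacings is established.
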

The proof is by calculation with the joint density function for
$(\theta_{(i)},\theta_{(j)})$. The import of the next lemma is that for large $n$ the optimal number
of users who will participate is with high probability close to $((1-c)/2) n$.

\begin{lemma}\label{lemmaxb}\ \\
(i) Let $m=an$, $0<a<1$. As $n\rightarrow\infty$,
\begin{align*}
E(g(m))&\sim1-2a,\\
\var(g(m))&\sim\frac{4a(1-a)}{n}.
\end{align*}
(ii)  Let
\[m(\theta)=\arg\max_{m\in N}(1-\rho^m)\sum_{i=1}^{m}
  g(\theta_{(i)})-mc.\]
For any $\epsilon>0$, there is a large enough $n$ such that
\begin{align*}
 m(\theta)< \left(\tfrac{1-c}{2}-\epsilon\right)n&\implies g(\theta_{\left(\left(\tfrac{1-c}{2}-\epsilon\right)n\right)})<c+\epsilon,\\
m(\theta)>  \left(\tfrac{1-c}{2}+\epsilon\right)n &\implies g(\theta_{\left(\left(\tfrac{1-c}{2}+\epsilon\right)n\right)})>c-\epsilon .
\end{align*}
(iii) As $n\rightarrow\infty$,
\begin{align*}
P\bigg( m(\theta)< \left(\tfrac{1-c}{2}-\epsilon\right)n\bigg)&\sim O\left(1/n\right),\\
P\bigg(m(\theta)>  \left(\tfrac{1-c}{2}+\epsilon\right)n \bigg)&\sim O\left(1/n\right).
\end{align*}
\end{lemma}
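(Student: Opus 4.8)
The plan is to prove the three parts in order, since part~(ii) relies on the moment asymptotics of part~(i) and part~(iii) combines both. For part~(i), I would first specialize $g$ to the uniform case: with $F(\theta)=\theta$ and $f(\theta)=1$ on $[0,1]$, definition~\eqref{g} collapses to $g(\theta)=2\theta-1$, so $g(\theta_{(m)})=2\theta_{(m)}-1$ is an affine function of a single order statistic (this is the quantity written $g(m)$ in the statement). Linearity of expectation and $\var(2X-1)=4\var(X)$ then reduce everything to Lemma~\ref{orderstat}. Substituting $i=m=an$ into $E[\theta_{(m)}]=1-\tfrac{m}{n+1}$ gives $E[g(\theta_{(m)})]=1-\tfrac{2m}{n+1}\to 1-2a$, and substituting into the variance formula gives $\var(g(\theta_{(m)}))=\tfrac{4m(n+1-m)}{(n+1)^2(n+2)}\sim\tfrac{4a(1-a)}{n}$; both asymptotics are routine once the affine reduction is in place.

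For part~(ii) I would argue by contrapositive, exploiting that $g(\theta_{(i)})$ is decreasing in $i$ (the $\theta_{(i)}$ are in decreasing order and $g$ is increasing). Writing $S_m=\sum_{i=1}^m g(\theta_{(i)})$ and $\Phi(m)=(1-\rho^m)S_m-mc$, a short computation gives the increment
\[
\Phi(m+1)-\Phi(m)=(1-\rho^{m+1})\,g(\theta_{(m+1)})+\rho^m(1-\rho)\,S_m-c.
\]
The key observation is that the correction $\rho^m(1-\rho)S_m$ is negligible once $m$ is of order $n$: since $|g|\le 1$ we have $|S_m|\le m\le n$, whence $|\rho^m(1-\rho)S_m|\le\rho^m(1-\rho)n\to 0$ for $m\ge m_-:=(\tfrac{1-c}{2}-\epsilon)n$. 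Thus for large $n$ the sign of the increment is governed by $g(\theta_{(m+1)})-c$. If $g(\theta_{(m_-)})\ge c+\epsilon$, monotonicity gives $g(\theta_{(i)})\ge c+\epsilon$ for all $i\le m_-$, so every increment up to $m_-$ is positive, $\Phi$ is increasing there, and the maximizer cannot lie below $m_-$ — exactly the first implication. The second is symmetric: with $m_+:=(\tfrac{1-c}{2}+\epsilon)n$, if $g(\theta_{(m_+)})\le c-\epsilon$ then the increments are negative for $m\ge m_+$, forcing $m(\theta)\le m_+$.

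For part~(iii) I would turn the implications of~(ii) into event inclusions and invoke Chebyshev. Since $\{m(\theta)<m_-\}\subseteq\{g(\theta_{(m_-)})<c+\epsilon\}$, and part~(i) with $a=\tfrac{1-c}{2}-\epsilon$ gives $E[g(\theta_{(m_-)})]\to c+2\epsilon$ with $\var(g(\theta_{(m_-)}))=O(1/n)$, the target event is a deviation of at least $\epsilon$ below the mean, so $P(m(\theta)<m_-)\le\var(g(\theta_{(m_-)}))/\epsilon^2=O(1/n)$. The bound on $P(m(\theta)>m_+)$ is identical, now with $a=\tfrac{1-c}{2}+\epsilon$ and mean $c-2\epsilon$.

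The main obstacle is the uniform control of the $\rho^m$ correction in part~(ii): for small $m$ the factor $\rho^m$ is not small and $S_m$ need not be positive, so the clean sign statement for the increment can fail near $m=0$. I would handle this by noting that every index of interest is of order $n$ (the optimum sits near $\tfrac{1-c}{2}n$), so there is a constant $M$ independent of $n$ beyond which the increment analysis is valid, while the finitely many values $\Phi(m)$ with $m<M$ are bounded and hence dominated by $\Phi(m_-)\sim m_-\epsilon\to\infty$. A minor case split on the sign of $c-\epsilon$ (relevant only when $\epsilon>c$) closes the argument for the negative increments.
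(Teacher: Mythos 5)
Your proposal is correct and takes essentially the same route as the paper: part (i) reduces to Lemma~\ref{orderstat} via $g(\theta)=2\theta-1$, part (ii) proves the contrapositive implications through sign analysis of profit differences using the monotonicity of $g(\theta_{(i)})$ in $i$, and part (iii) converts (ii) into event inclusions and applies Chebyshev with the moments from (i). The only cosmetic difference is in the first implication of (ii): the paper compares the profit at $m$ and at $\left(\tfrac{1-c}{2}-\epsilon\right)n$ in a single aggregate step (which sidesteps the small-$m$ sign problem you patch with the constant-$M$ domination argument), whereas you proceed increment by increment; both are valid.
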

\begin{proof}
  For (i), directly compute $E(g(m))$ and $\var(g(m))$ according to Lemma \ref{orderstat}, we get
  \begin{align*}
E(g(m))&\sim1-2a,\\
\var(g(m))&\sim\frac{4a(1-a)}{n}.
  \end{align*}

For (ii), to prove
\[m(\theta)< (\tfrac{1-c}{2}-\epsilon)n\implies g(\theta_{((\tfrac{1-c}{2}-\epsilon)n)})<c+\epsilon,\]
we will prove its equivalent statement
 \[g(\theta_{((\tfrac{1-c}{2}-\epsilon)n)})\ge c+\epsilon\implies m(\theta)\ge (\tfrac{1-c}{2}-\epsilon)n.\]
Suppose it is true that
$g(\theta_{((\tfrac{1-c}{2}-\epsilon)n)})\ge c+\epsilon.$
For any $0< m<(\tfrac{1-c}{2}-\epsilon)n$, if the platform user number increases from $m$ to $(\tfrac{1-c}{2}-\epsilon)n$,  the increment of platform's profit is
\begin{align*}
&(1-\rho^{(\tfrac{1-c}{2}-\epsilon)n})\sum_{i=1}^{(\tfrac{1-c}{2}-\epsilon)n}g(\theta_{(i)}) -(\tfrac{1-c}{2}-\epsilon)nc\\
&-(1-\rho^m)\sum_{i=1}^{m} g(\theta_{(i)}) +m c\\
&\geq(1-\rho^{(\tfrac{1-c}{2}-\epsilon)n})((\tfrac{1-c}{2}-\epsilon)n-m)(c+\epsilon) \\ &\quad-((\tfrac{1-c}{2}-\epsilon)n-m)c\\
&>0\,,
\end{align*}
since $n\rightarrow\infty$. Thus, it follows that $m(\theta)>(\tfrac{1-c}{2}-\epsilon)n$.

Similarly, to prove
\[m(\theta)>  (\tfrac{1-c}{2}+\epsilon)n \implies g(\theta_{((\tfrac{1-c}{2}+\epsilon)n)})>c-\epsilon \]
we will prove its equivalent statement
 \[g(\theta_{((\tfrac{1-c}{2}+\epsilon)n)})\le c-\epsilon\implies m(\theta)\le  (\tfrac{1-c}{2}+\epsilon)n .\]
Suppose it is true that $g(\theta_{((\tfrac{1-c}{2}+\epsilon)n)})\le c-\epsilon$.
For any $(\tfrac{1-c}{2}+\epsilon)n<m\le n$ , if the platform user number increases from $m-1$ to $m$,  the increment of platform's profit is
\begin{align*}
h(m)=\,&\rho^{m-1}(1-\rho)\sum_{i=1}^{m-1} g(\theta_{(i)})+(1-\rho^m)g(\theta_{(m)})-c\\
\le\,&\rho^{m-1}(1-\rho)m+(1-\rho^m)(c-\epsilon)-c\\
<\,&0,
  \end{align*}
since $n\rightarrow\infty$. Thus, it follows that $m(\theta)\le  (\tfrac{1-c}{2}+\epsilon)n$

 For (iii) we use Chebyshev's inequality. By (i) and (ii),
 \begin{align*}
P\bigg(m(\theta)< (\tfrac{1-c}{2}-\epsilon)n\bigg)
&\le P\bigg(g(\theta_{((\tfrac{1-c}{2}-\epsilon)n)})<c+\epsilon\bigg)\\
&\le\frac{\var[g(\theta_{((\tfrac{1-c}{2}-\epsilon)n)})-c-\epsilon]}{E^2[g(\theta_{((\tfrac{1-c}{2}-\epsilon)n)})-c-\epsilon]}\\
&\sim O(1/n).
\end{align*}
Similarly, by (i)  and (ii),
 \begin{align*}
P\bigg(m(\theta)> (\tfrac{1-c}{2}+\epsilon)n\bigg)
&\le P\bigg(g(\theta_{((\tfrac{1-c}{2}+\epsilon)n)})>c-\epsilon\bigg)\\
&\le\frac{\var[g(\theta_{((\tfrac{1-c}{2}-\epsilon)n)})-c+\epsilon]}{E^2[g(\theta_{((\tfrac{1-c}{2}-\epsilon)n)})-c+\epsilon]}\\
&\sim O(1/n).\qedhere
\end{align*}

\end{proof}

Now we will prove the theorem.
\begin{proof}
When $n\rightarrow\infty$, the platform's uniform pricing problem becomes
\[\max_{\bar\theta}n(1-\bar{\theta})\bar{\theta} -n(1-\bar{\theta})c.\]
It's a quadratic function and the optimal threshold is $\bar{\theta}^*=\frac{1+c}{2}$ and the resulting maximal profit is $(\tfrac{1-c}{2})^2n$.

Note that $\Pi_D\ge\Pi_U$, we only need to show that $\Pi_D$ is bounded by $(\tfrac{1-c}{2})^2n$.
We use below that $g\le 1$.
\begin{align*}
\int  & \max_{m\in N}(1-\rho^m)\bigg(\sum_{i=1}^{m}
  g(\theta_{(i)})\bigg)-mc\,\dif F^n(\theta)\\
&=
E\left[\max_{m\in N} (1-\rho^m) \bigg(\sum_{i=1}^{m}
  g(\theta_{(i)})\bigg)-mc\right]\\
&\leq P\Bigl(|m(\theta)-\tfrac{1-c}{2}n|>\epsilon n\Bigr)n\\
&\quad+ E\left[\max_{m:|m- \tfrac{1-c}{2}n|\leq \epsilon n} (1-\rho^m)\bigg(\sum_{i=1}^{m}
  g(\theta_{(i)})\bigg)\right].
\end{align*}

By Lemma \ref{lemmaxb}(iii), the first part is $\sim O(1)$. Now we compute the second fart as follows,
\begin{align*}
 &E\left[\max_{m:|m- \tfrac{1-c}{2}n|\leq \epsilon n} (1-\rho^m)\bigg(\sum_{i=1}^{m}
  g(\theta_{(i)})\bigg)\right]\\
  %\le &E\left[\max_{m:|m- \tfrac{1-c}{2}n|\leq \epsilon n}  \bigg(\sum_{i=1}^{(\tfrac{1-c}{2}-\epsilon)n}
  %g(\theta_{(i)})+\sum_{i=1+(\tfrac{1-c}{2}-\epsilon)n}^{m}g(\theta_{(i)})\bigg)\right]\\
  &\leq E\left[\sum_{i=1}^{(\tfrac{1-c}{2}-\epsilon)n}
  g(\theta_{(i)})+2\epsilon n\right].
\end{align*}
Using Lemma \ref{orderstat}, we see that the second part  is  $\sim(\tfrac{1-c}{2})^2n$. Therefore, the profit achieved by the differentiated pricing scheme is $\sim(\tfrac{1-c}{2})^2n$.
\end{proof}

\section{Proof of Proposition \ref{poiexpoprop}}\label{poiexpopropproof}
 \begin{proof}
 As $n\rightarrow\infty$
\begin{align*}
E_\theta(\Pi) &=   E_\theta \Bigg(  \max_{\{\pi_i,\ {i\in{N}}\}}\ \sum_{i\in{N}}\pi_i \bigg(\theta_i(1-\rho^{\sum_{i}\pi_i})-c\bigg)\Bigg)\nonumber\\
  &\sim  E_\theta \Bigg(\max_{\{\pi_i,\ {i\in{N}}\}}\ \sum_{i\in{N}}\pi_i \bigg(\theta_i-c\bigg)\Bigg)\notag\\
  &= E_\theta \Bigg(\sum_{i:\{\theta_i\ge c\}} (\theta_i - c)\Bigg)\notag\\
  &=n\int_c^1(\theta-c)\dif F(\theta)\\
  &=\frac{(1-c)^2}{2}n.
\end{align*}
Thus,
\[\PoI=\lim\limits_{n\rightarrow\infty}\frac{E_\theta(\Pi)}{\Pi_U}=\frac{\frac{(1-c)^2}{2}n}{(\tfrac{1-c}{2})^2n}=2.\qedhere\]
 \end{proof}

\section{Proof of Theorem \ref{opt:incom:uniinfty}}\label{opt:incom:uniinftyproof}
We first prove some lemmas.
\begin{lemma}\label{lemmalog}\ \\
(i) Let
\[m(\theta)=\arg\max_{m\in N}\log (m) \bigg(\sum_{i=1}^{m}
  g(\theta_{(i)})\bigg)-mc.\]
Then for any $\epsilon>0$, there is a large enough $n$ such that
\begin{align*}
 m(\theta)< (\tfrac{1}{2}-\epsilon)n&\implies g(\theta_{((\tfrac{1}{2}-\epsilon)n)})<\epsilon,\\
m(\theta)>  (\tfrac{1}{2}+\epsilon)n &\implies g(\theta_{((\tfrac{1}{2}+\epsilon)n)})>-\epsilon .
\end{align*}

(ii) As $n\rightarrow\infty$,
\begin{align*}
P\bigg(m(\theta)< (\tfrac{1}{2}-\epsilon)n\bigg)&\sim O(1/n),\\
P\bigg(m(\theta)>  (\tfrac{1}{2}+\epsilon)n\bigg)&\sim O(1/n).
\end{align*}
\end{lemma}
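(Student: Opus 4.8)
The plan is to follow the template of Lemma~\ref{lemmaxb}, replacing the bounded factor $1-\rho^m$ by $\log m$ and tracking how the unboundedness of $\log m$ reshapes the argument. Write $\Pi(m)=\log m\sum_{i=1}^m g(\theta_{(i)})-mc$ for the objective defining $m(\theta)$, and $S_m=\sum_{i=1}^m g(\theta_{(i)})$. Since the $\theta_i$ are i.i.d.\ $U[0,1]$ we have $g(\theta)=2\theta-1$, so $g(\theta_{(m)})$ changes sign exactly at the index $m=n/2$; this is why the two thresholds sit symmetrically around $\tfrac12 n$ and why the right-hand sides of the implications in (i) are $\epsilon$ and $-\epsilon$ rather than quantities involving $c$.

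For part (i) I would prove the two contrapositives
\[g(\theta_{((\tfrac12-\epsilon)n)})\ge\epsilon\ \Rightarrow\ m(\theta)\ge(\tfrac12-\epsilon)n,\qquad g(\theta_{((\tfrac12+\epsilon)n)})\le-\epsilon\ \Rightarrow\ m(\theta)\le(\tfrac12+\epsilon)n,\]
by showing that moving the user count to the target index never decreases (resp.\ increases) profit. Set $M=(\tfrac12-\epsilon)n$. If $g(\theta_{(M)})\ge\epsilon$ then, because $g(\theta_{(i)})$ is nonincreasing in $i$, every term with $i\le M$ is at least $\epsilon$, so for any $m_0<M$ the partial sum $S_{m_0}$ is nonnegative and $S_M\ge S_{m_0}+(M-m_0)\epsilon$. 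Using $\log m_0\le\log M$ and $S_{m_0}\ge0$ this yields
\[\Pi(M)-\Pi(m_0)\ \ge\ \log M\,(S_M-S_{m_0})-(M-m_0)c\ \ge\ (M-m_0)(\epsilon\log M-c),\]
which is positive once $\epsilon\log M>c$, i.e.\ for all $n$ beyond a threshold depending only on $\epsilon$ and $c$; hence $m(\theta)\ge M$. The second implication is symmetric: for $M'=(\tfrac12+\epsilon)n$ and $m>M'$ one bounds the single-step increment $h(m)=[\log m-\log(m-1)]S_{m-1}+\log m\,g(\theta_{(m)})-c$. Since $|\log m-\log(m-1)|\le 1/(m-1)$ and $|S_{m-1}|\le m-1$, the telescoping term is bounded by $1$ in absolute value, while $g(\theta_{(m)})\le-\epsilon$ forces $h(m)\le 1-c-\epsilon\log m<0$ for large $n$; thus $\Pi$ is decreasing beyond $M'$ and $m(\theta)\le M'$.

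For part (ii) I would feed the contrapositives of (i) into Chebyshev's inequality exactly as in Lemma~\ref{lemmaxb}(iii). Part (i) gives the event inclusions $\{m(\theta)<(\tfrac12-\epsilon)n\}\subseteq\{g(\theta_{((\tfrac12-\epsilon)n)})<\epsilon\}$ and $\{m(\theta)>(\tfrac12+\epsilon)n\}\subseteq\{g(\theta_{((\tfrac12+\epsilon)n)})>-\epsilon\}$. The mean and variance of $g$ at these order statistics are supplied by Lemma~\ref{lemmaxb}(i), which depends only on the uniform law of $\theta$ and on $g(\theta)=2\theta-1$, not on the utility model: at $a=\tfrac12\mp\epsilon$ one has $E[g]\sim\pm2\epsilon$ and $\var[g]\sim(1-4\epsilon^2)/n$. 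Hence $g(\theta_{((\tfrac12-\epsilon)n)})-\epsilon$ has mean $\sim\epsilon>0$ and variance $O(1/n)$, so Chebyshev gives $P(g<\epsilon)\le\var/E^2=O(1/n)$; the other tail is identical using $g(\theta_{((\tfrac12+\epsilon)n)})+\epsilon$, whose mean is $\sim-\epsilon$.

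The only genuine obstacle is in part (i): controlling the discrete difference of $\log$. The payoff is that, unlike the bounded case where both the marginal service gain $\rho^{m-1}(1-\rho)\sum g$ and the cost are $O(1)$, here the telescoping term $[\log m-\log(m-1)]\sum_{i<m}g(\theta_{(i)})$ is uniformly $O(1)$ (the $1/m$ cancels against the $O(m)$ sum), whereas the genuinely growing term $\log m\,g(\theta_{(m)})$ carries a factor $\log m\to\infty$. Consequently $c$ becomes asymptotically irrelevant and the sign of each marginal step is decided purely by whether $g(\theta_{(m)})$ is bounded away from $0$ — which is exactly what the hypotheses $g\ge\epsilon$ and $g\le-\epsilon$ guarantee. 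Once this is recognized, every inequality above is routine.
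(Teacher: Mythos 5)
Your proposal is correct and follows essentially the same route as the paper's proof: establish part (i) by proving the two contrapositives via profit-increment comparisons (using that $\epsilon\log M - c$ and $\epsilon\log m - c - O(1)$ become positive/negative as $n\to\infty$), then obtain part (ii) from Chebyshev's inequality with the order-statistic moments of Lemma~\ref{lemmaxb}(i). The only cosmetic difference is that for the second implication you bound the single-step increment $h(m)$ uniformly, whereas the paper compares $\Pi(m)$ directly with $\Pi\bigl((\tfrac{1}{2}+\epsilon)n\bigr)$; both hinge on the same observation that the unbounded factor $\log m$ multiplying a sign-definite $g(\theta_{(m)})$ dominates the $O(1)$ telescoping term and the linear cost.
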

\begin{proof}
For (i), to prove
\[m(\theta)< (\tfrac{1}{2}-\epsilon)n\implies g(\theta_{((\tfrac{1}{2}-\epsilon)n)})<\epsilon,\]
we will prove its equivalent statement
 \[g(\theta_{((\tfrac{1}{2}-\epsilon)n)})\ge\epsilon\implies m(\theta)\ge (\tfrac{1}{2}-\epsilon)n.\]
Suppose it is true that $g(\theta_{((\tfrac{1}{2}-\epsilon)n)})\ge\epsilon$.
For any $0< m<(\tfrac{1}{2}-\epsilon)n$, if the subscriber number increases from $m$ to $(\tfrac{1}{2}-\epsilon)n$,  the increment of platform's profit is
\begin{align*}
&\log((\tfrac{1}{2}-\epsilon)n) \sum_{i=1}^{(\tfrac{1}{2}-\epsilon)n}g(\theta_{(i)}) -(\tfrac{1}{2}-\epsilon)nc\\
&-\log(m)  \sum_{i=1}^{m} g(\theta_{(i)}) +m c\\
&=\bigg(\log((\tfrac{1}{2}-\epsilon)n)-\log(m)\bigg)\sum_{i=1}^{m}g(\theta_{(i)}) \\ &\quad+\log((\tfrac{1}{2}-\epsilon)n)\sum_{i=m +1}^{(\tfrac{1}{2}-\epsilon)n} g(\theta_{(i)})-((\tfrac{1}{2}-\epsilon)n-m)c\\
&>((\tfrac{1}{2}-\epsilon)n-m)\bigg(\epsilon\log((\tfrac{1}{2}-\epsilon)n)-c\bigg)\\
&>0,
\end{align*}
since $n\rightarrow\infty$. Thus, it follows that $m(\theta)\ge (\tfrac{1}{2}-\epsilon)n$.

Similarly, to prove
\[m(\theta)> (\tfrac{1}{2}+\epsilon)n \implies g(\theta_{((\tfrac{1}{2}+\epsilon)n)})>-\epsilon ,\]
we will prove its equivalent statement
 \[g(\theta_{((\tfrac{1}{2}+\epsilon)n)})\le-\epsilon\implies m(\theta)\le (\tfrac{1}{2}+\epsilon)n .\]
Suppose it is true that $g(\theta_{((\tfrac{1}{2}+\epsilon)n)})\le-\epsilon$.
For any $(\tfrac{1}{2}+\epsilon)n<m\le n$ , if the subscriber number increases from $(\tfrac{1}{2}+\epsilon)n$ to $m$,  the increment of platform's profit is
\begin{align*}
&\log(m) \sum_{i=1}^{m} g(\theta_{(i)}) -m c-\log((\tfrac{1}{2}+\epsilon)n) \sum_{i=1}^{(\tfrac{1}{2}+\epsilon)n}g(\theta_{(i)}) \\
&+(\tfrac{1}{2}-\epsilon)nc\\
&=\bigg(\log(m)-\log((\tfrac{1}{2}+\epsilon)n)\bigg)\sum_{i=1}^{(\tfrac{1}{2}+\epsilon)n}g(\theta_{(i)}) \\
&\quad+\log(m)\sum_{i=(\tfrac{1}{2}-\epsilon)n +1}^{m} g(\theta_{(i)})-(m-(\tfrac{1}{2}+\epsilon)n )c\\
&<\bigg(\log(m)-\log((\tfrac{1}{2}+\epsilon)n)\bigg)(\tfrac{1}{2}+\epsilon)n\\ &\quad-\log(m)(m-(\tfrac{1}{2}-\epsilon)n)\epsilon\\
&<0,
\end{align*}
since $n\rightarrow\infty$. Thus, it follows that $ m(\theta)\le (\tfrac{1}{2}+\epsilon)n$

 For (ii) we use Chebyshev's inequality. By (i)  and Lemma~\ref{lemmaxb}(i),
 \begin{align*}
P\bigg(m(\theta)< (\tfrac{1}{2}-\epsilon)n\bigg)&\le P\bigg(g(\theta_{((\tfrac{1}{2}-\epsilon)n)})-\epsilon<0\bigg)\\
&\le \frac{\var[g(\theta_{((\tfrac{1}{2}-\epsilon)n)})-\epsilon]}{E^2[g(\theta_{((\tfrac{1}{2}-\epsilon)n)})-\epsilon]}\\
&\sim O(1/n).
\end{align*}
Similarly, by (i)  and Lemma~\ref{lemmaxb}(i),
 \begin{align*}
  P\bigg(m(\theta)>  (\tfrac{1}{2}+\epsilon)n\bigg)&\le P\bigg(g(\theta_{((\tfrac{1}{2}+\epsilon)n)})+\epsilon>0\bigg)\\
  &\le \frac{\var[g(\theta_{((\tfrac{1}{2}+\epsilon)n)})+\epsilon]}{E^2[g(\theta_{((\tfrac{1}{2}+\epsilon)n)})+\epsilon]}\\
&\sim O(1/n).\qedhere
\end{align*}
\end{proof}

Now we will prove the theorem.
\begin{proof}
When $n\rightarrow\infty$, the platform's uniform pricing problem becomes
\[\max_{\bar\theta}\bar{\theta}\log(n(1-\bar{\theta}) ).\]
The optimal threshold is $\bar{\theta}^*=\tfrac{1}{2}$ and the resulting maximal profit is $\frac{n}{4}\log(\frac{n}{2})$.

Note that $\Pi_D\ge\Pi_U$, we only need to show that $\Pi_D$ is bounded by $\frac{n}{4}\log(\frac{n}{2})$.
We use below that $g\le 1$.
\begin{align*}
 \int  &\max_{m\in N}\log (m) \bigg(\sum_{i=1}^{m}
  g(\theta_{(i)})\bigg)-mc\,\dif F^n(\theta)\\
&=
E\left[\max_{m\in N} \log (m) \bigg(\sum_{i=1}^{m}
  g(\theta_{(i)})\bigg)-mc\right]\\
&\le
E\left[\max_{m\in N} \log (m) \bigg(\sum_{i=1}^{m}
  g(\theta_{(i)})\bigg)\right]\\
&\leq P\Bigl(|m(\theta)-\tfrac{1}{2}n|>\epsilon n\Bigr)n\log (n)\\
&\quad+E\left[\max_{m:|m- \tfrac{1}{2}n|\leq \epsilon n}  \log (m)\bigg(\sum_{i=1}^{m}
  g(\theta_{(i)})\bigg)\right]
\end{align*}

By Lemma \ref{lemmalog}(ii), the first part is $\sim O(\log(n))$. Now we compute the second fart as follows,
\begin{align*}
 & E\left[\max_{m:|m- \tfrac{1}{2}n|\leq \epsilon n}  \log m\bigg(\sum_{i=1}^{m}
  g(\theta_{(i)})\bigg)\right]\\
&\le  \log \Big((\tfrac{1}{2}+\epsilon)n\Big)\cdot\\
  & \quad E\left[\max_{m:|m- \tfrac{1}{2}n|\leq \epsilon n}  \bigg(\sum_{i=1}^{(\tfrac{1}{2}-\epsilon)n}
  g(\theta_{(i)})+\sum_{i=1+(\tfrac{1}{2}-\epsilon)n}^{m}
  g(\theta_{(i)})\bigg)\right]\\
&\le\log  \Big((\tfrac{1}{2}+\epsilon)n\Big)E\left[\sum_{i=1}^{(\tfrac{1}{2}-\epsilon)n}
  g(\theta_{(i)})+2\epsilon n\right].
\end{align*}
Using Lemma \ref{orderstat}, we see that the second part  is  $\sim \frac{n}{4}\log(\frac{n}{2})$. Therefore, the profit achieved by the differentiated pricing scheme is $\sim\frac{n}{4}\log(\frac{n}{2})$.
\end{proof}

\section{Proof of Corollary \ref{asymptoticlinear}}\label{asymptoticlinearproof}
Similar to \eqref{opt:com:log}, the platform's optimization problem under complete information is
\begin{equation}\label{opt:com:linear}
  \max_{m\in N}\ \bigg(m\sum_{i=1}^m \theta_i - mc\bigg).
\end{equation}
Similar to \eqref{eq:munbouned}, the platform's optimization problem  under incomplete information and differentiated pricing scheme is
\begin{equation}
\int \max_{m\in N} \bigg( m \sum_{i=1}^{m}
  g(\theta_{(i)}) -mc\bigg) \dif F^n(\theta).\label{eq:munbounedlinear}
\end{equation}
Similar to \eqref{opt:incom:uni2}, the platform's optimization problem  under incomplete information and uniform pricing scheme is
\begin{equation}\label{opt:incom:unilinear}
\max_{\bar\theta}\bar{\theta}(1-\bar{\theta})^2n^2-n(1-\bar{\theta})c.
\end{equation}
As $n\rightarrow\infty$, \eqref{opt:incom:unilinear} becomes,
\[
\max_{\bar\theta}\bar{\theta}(1-\bar{\theta})^2n^2.
\]
The optimum is attained at $\bar{\theta}^*\rightarrow\tfrac{1}{3}$ and the optimal profit is $\sim 4/27n^2$. This proves the first part of Corollary \ref{asymptoticlinear}. For the rest part, we introduce some Lemmas.

\begin{lemma}\label{lemma}\ \\
(i) Let
\begin{align*}
  m(\theta) &= \arg\max_{m\in N} m\left(\sum_{i=1}^m
    g(\theta_{(i)})-c\right),\\[6pt]
  h(m) &=  \left[m\left(\sum_{i=1}^m
    g(\theta_{(i)})-c\right)\right]\\
    &\quad-\left[ (m-1)\left(\sum_{i=1}^{m-1}
    g(\theta_{(i)})-c\right)\right]\\
&= m  g(\theta_{(m)})+\sum_{i=1}^{m-1}
    g(\theta_{(i)}) - c.
\end{align*}
Then for $m\geq1$,
\begin{align*}
 m(\theta)\geq m&\implies h(m)\geq 0 \mbox{ or }g(\theta_{(m)})\ge0,\\
m(\theta)< m &\implies h(m)<0 .
\end{align*}
(ii) For $m=a n$, $0<a<1$,
\[
Eh(m) \sim (2-3a)a n.
\]
(iii) Suppose $m= (\tfrac{2}{3}-\epsilon)n$, where $\epsilon$ is
 a small positive number. Then
\begin{align*}
  P( h(m)\leq 0)&\leq \frac{\var[h(m)]}{(3\epsilon)^2 (\tfrac{2}{3}-\epsilon)^2 n^2}.
\end{align*}
Suppose $m= (\tfrac{2}{3}+\epsilon)n$. Then
\begin{align*}
  P( h(m)\geq0)&\leq \frac{\var[h(m)]}{(3\epsilon)^2 (\tfrac{2}{3}+\epsilon)^2 n^2}.
\end{align*}
(iv) For $m=a n$, $0<a<1$, \[
\var[h(m)] = O(n).
\]
(v) For $m=a n$, $a\ge2/3+\epsilon$, \[
  P( g(m)\geq0) = O(1/n).
\]
(vi)
\[
P\Bigl(|m(\theta)-\tfrac{2}{3}n|>\epsilon n\Bigr) = O(1/n).
\]
\end{lemma}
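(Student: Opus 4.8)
The plan is to reduce every part to moment estimates of the uniform order statistics $\theta_{(1)}\ge\cdots\ge\theta_{(n)}$. Because the prior is uniform, $g(\theta)=2\theta-1$ is linear, so $g(\theta_{(i)})=2\theta_{(i)}-1$ and all of the means, variances and covariances of the $g(\theta_{(i)})$ are immediate from Lemma~\ref{orderstat}. Part (i) is the structural backbone: it turns the events $\{m(\theta)\lessgtr m\}$ into sign conditions on the scalar marginal increment $h(m)$, after which parts (ii)--(v) are moment computations combined with Chebyshev's inequality, and part (vi) is a union bound.

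For part (i) I would write $h(m)=\Pi(m)-\Pi(m-1)$ for the objective $\Pi(m)=m\big(S(m)-c\big)$ with $S(m)=\sum_{i=1}^m g(\theta_{(i)})$, and observe that $S$ is concave in $m$ since its increments $g(\theta_{(m)})$ are non-increasing ($g$ is increasing and $\theta_{(m)}$ is decreasing in $m$). The key computation is $h(m+1)-h(m)=(m+1)g(\theta_{(m+1)})-(m-1)g(\theta_{(m)})$: once $g(\theta_{(m)})<0$ (that is, $\theta_{(m)}<\tfrac{1}{2}$) one has $g(\theta_{(m+1)})\le g(\theta_{(m)})<0$ and hence $(m+1)g(\theta_{(m+1)})<(m-1)g(\theta_{(m)})$, so $h$ is strictly decreasing on the whole range of indices past this sign change. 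Combining this monotonicity with optimality — at the maximizer $h(m(\theta)+1)\le0$ — propagates the negativity of $h$ to every larger index and yields $m(\theta)<m\Rightarrow h(m)<0$; the indices before the sign change, where $g(\theta_{(m)})\ge0$ and $h$ need not be monotone, are exactly the disjunct permitted in $m(\theta)\ge m\Rightarrow h(m)\ge0\text{ or }g(\theta_{(m)})\ge0$. I expect this to be the main obstacle, because $\Pi$ is a product of the increasing factor $m$ and the merely concave factor $S(m)-c$ and so is not unimodal; the argument must localize to the region $m\gtrsim n/2$ where $h$ is genuinely monotone, which is precisely the region in which parts (iii)--(vi) evaluate $h$.

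With (i) in place, part (ii) is a direct mean computation: from $E[g(\theta_{(i)})]=1-\tfrac{2i}{n+1}$ one gets $E[h(m)]=2m-1-c-\tfrac{3m^2-m}{n+1}$, whose leading term at $m=an$ is $(2-3a)an$, positive for $a<\tfrac{2}{3}$ and negative for $a>\tfrac{2}{3}$. Part (iv) is the companion variance bound: expanding $h(m)=2m\theta_{(m)}+2\sum_{i=1}^{m-1}\theta_{(i)}$ up to an additive constant and inserting the $O(1/n)$ variance and covariance formulas of Lemma~\ref{orderstat}, the term $m^2\var[\theta_{(m)}]$, the double sum of $O(n^2)$ pairwise covariances, and the cross term are each $O(n)$, giving $\var[h(m)]=O(n)$. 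Part (iii) is then Chebyshev around the nonzero mean from (ii): at $m=(\tfrac{2}{3}-\epsilon)n$ the mean is $\sim 3\epsilon(\tfrac{2}{3}-\epsilon)n>0$, so $P(h(m)\le0)\le\var[h(m)]/\big((3\epsilon)^2(\tfrac{2}{3}-\epsilon)^2n^2\big)$, and symmetrically at $m=(\tfrac{2}{3}+\epsilon)n$ with the mean now negative.

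Finally, part (v) is Chebyshev applied to a single order statistic: $g(\theta_{(m)})\ge0$ means $\theta_{(m)}\ge\tfrac{1}{2}$, whereas $E[\theta_{(m)}]=1-\tfrac{m}{n+1}\le\tfrac{1}{3}-\epsilon$ when $a\ge\tfrac{2}{3}+\epsilon$, so the deviation is a positive constant while $\var[\theta_{(m)}]=O(1/n)$, forcing $P(g(\theta_{(m)})\ge0)=O(1/n)$. Part (vi) then assembles the concentration by a union bound: by part (i), $\{m(\theta)<(\tfrac{2}{3}-\epsilon)n\}\subseteq\{h((\tfrac{2}{3}-\epsilon)n)<0\}$ while $\{m(\theta)>(\tfrac{2}{3}+\epsilon)n\}\subseteq\{h((\tfrac{2}{3}+\epsilon)n)\ge0\}\cup\{g(\theta_{((\tfrac{2}{3}+\epsilon)n)})\ge0\}$, and combining the bounds from (iii)--(v) makes each event $O(1/n)$, whence $P(|m(\theta)-\tfrac{2}{3}n|>\epsilon n)=O(1/n)$.
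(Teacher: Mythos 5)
Your parts (ii)--(vi) follow the paper's proof essentially verbatim: (ii) and (iv) are the same moment computations driven by Lemma~\ref{orderstat}, (iii) and (v) are the same Chebyshev bounds around the nonzero means, and (vi) is the same union bound assembled from (i) and (iii)--(v); all of that is fine. The divergence is in part (i), which you correctly identify as the backbone and the main obstacle, and there your argument has a genuine gap --- one which, to be fair, the paper's own one-line proof (``$m(\theta)$ is the greatest $m$ such that $h(m)\ge 0$'') shares.

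Your identity $h(m+1)-h(m)=(m+1)g(\theta_{(m+1)})-(m-1)g(\theta_{(m)})$ does prove that $h$ is strictly decreasing past the sign change of $g$, and this yields the first implication and also the second implication \emph{for those $m$ with $g(\theta_{(m)})<0$}. But the propagation step ``$h(m(\theta)+1)\le 0$ propagates to every larger index'' is invalid when $m(\theta)+1$ precedes the sign change, since $h$ need not be monotone there; and no amount of ``localization'' can repair it, because the second implication as stated is simply false for such $m$. Counterexample: $n=3$, $c=0.5$, $\theta_{(1)}=0.58$, $\theta_{(2)}=0.575$, $\theta_{(3)}=0.5375$, i.e.\ $g$-values $0.16$, $0.15$, $0.075$. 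The objective equals $-0.34$, $-0.38$, $-0.345$ at $m=1,2,3$, so $m(\theta)=1<3$, yet $h(3)=0.035>0$ (this also falsifies the paper's identification of $m(\theta)$ with the last index where $h\ge 0$). The correct statement needs the same disjunct in both directions: $m(\theta)<m\implies h(m)<0$ \emph{or} $g(\theta_{(m)})\ge 0$. That amended version is provable: if $h(m)\ge 0$ and $g(\theta_{(m)})<0$, then $\sum_{i=1}^{m-1}g(\theta_{(i)})\ge c+m\,|g(\theta_{(m)})|>c$, so the objective at $m-1$ is strictly positive; a positive maximum rules out maximizers before the sign change (there, any index with positive value is dominated by later ones, since the partial sums of $g$ are still increasing), and past the sign change your monotonicity argument places the maximizer at or beyond $m$. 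The amendment costs nothing downstream: in (vi) the bound on $P\bigl(m(\theta)<(\tfrac{2}{3}-\epsilon)n\bigr)$ merely picks up the extra event $\{g(\theta_{((\frac{2}{3}-\epsilon)n)})\ge 0\}$, which is $O(1/n)$ by exactly the Chebyshev argument of part (v), since the index fraction $\tfrac{2}{3}-\epsilon$ still exceeds $\tfrac12$. So your plan (and the paper's) goes through, but only after part (i) is restated and reproved along these lines; as written, the step you yourself flagged as the main obstacle is not closed.
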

\begin{proof}
  The truth of (i) is straightforward. Note that $m(\theta)$ is the greatest $m$ such that $h(m)\geq0$ and note that $h(m)$ increases in $m$ when $g(\theta_{(m)})$ is nonnegative and decreases in $m$ when $g(\theta_{(m)})$ is negative.

   For (ii), recall that
  $g(\theta_i)=2\theta_i-1$. Thus, $E[g(\theta_{(i)})]=2(1-i/(n+1))-1$ and hence
\begin{align*}
E\left[m g(\theta_{(m)})+\sum_{i=1}^{m-1}g(\theta_{(i)})\right]
&=2m\left(1-\frac{m}{n+1}\right)-m\\
&\quad +\sum_{i=1}^{m-1}2\left(1-\frac{i}{n+1}\right)-(m-1)\\
&= -\frac{3 m^2-2 m n-3 m+n+1}{n+1}\\
\intertext{and so if $m= an$}
E[h(m)]&\sim (2-3a)an.
\end{align*}
%and so negative if $a>2/3$.

\noindent For (iii) we use Chebyshev's
inequality.
Suppose $m\leq (\tfrac{2}{3}-\epsilon)n$. Then
\begin{align*}
  P( h(m)\leq0)&= P( h(m)-E[h(m)]\leq-E[h(m)])\\[4pt]
&\leq  P( |h(m)-E[h(m)]|\geq E[h(m)])\\[4pt]
&\leq \frac{\var[h(m)]}{(3\epsilon)^2 (\tfrac{2}{3}-\epsilon)^2n^2}.
\end{align*}
Suppose $m\geq (\tfrac{2}{3}+\epsilon)n$. Then
\begin{align*}
  P( h(m)\geq0)&= P( h(m)-E[h(m)]\geq-E[h(m)])\\[4pt]
&\leq  P( |h(m)-E[h(m)]|\geq -E[h(m)])\\[4pt]
&\leq \frac{\var[h(m)]}{(3\epsilon)^2 (\tfrac{2}{3}+\epsilon)^2 n^2}.
\end{align*}

\noindent For (iv) we find $\var[h(m)]$.
This is
\begin{align*}
  \var[h(m)] & = m^2 \var[ g(\theta_{(m)})]+\sum_{i=1}^{m-1}
    \var[g(\theta_{(i)})]\\
&\quad +2m \sum_{i=1}^{m-1}\cov[ g(\theta_{(i)}), g(\theta_{(m)})]\\
&\quad +2\sum_{1\leq i<j\leq m-1} \cov[ g(\theta_{(i)}), g(\theta_{(j)})].
\end{align*}
An evaluation of this for $m=an$ gives
\[
\var[h(m)]  \sim \left(\frac{28 a^3}{3}-9
   a^4\right)n.
\]
The term in parentheses is positive.

For (v), recall that
  $g(\theta_m)=2\theta_m-1$.
  If $m\ge(\tfrac{2}{3}+\epsilon)n$,
  then $E[g(\theta_{(m)})]=2(1-m/(n+1))-1<0$ and $\var[g(\theta_{(m)})]=4\frac{m(n+1-m)}{(n+1)^2(n+2)}$. It follows that
\begin{align*}
  P( g(\theta_{(m)})>0)&= P( g(\theta_{(m)})-E[g(\theta_{(m)})]>-E[g(\theta_{(m)})])\\
&\leq  P( |g(\theta_{(m)})-E[g(\theta_{(m)})]|\geq -E[g(\theta_{(m)})])\\
&\leq\frac{\var[g(\theta_{(m)})]}{E^2[g(\theta_{(m)})]}=\frac{4\frac{m(n+1-m)}{(n+1)^2(n+2)}}{(1-2m/(n+1))^2}.
\end{align*}
Thus, when $m=an\ge(\tfrac{2}{3}+\epsilon)n$, we have $  P( g(\theta_{(m)})>0)\sim O(1/n)$.

For (vi), note that
\begin{align*}
&P\Bigl(|m(\theta)-\tfrac{2}{3}n|>\epsilon n\Bigr)\\
&=P\Bigl(m(\theta)>(\tfrac{2}{3}+\epsilon) n\Bigr)+P\Bigl(m(\theta)<(\tfrac{2}{3}-\epsilon) n\Bigr)\\
&\leq P\Bigl(h((\tfrac{2}{3}+\epsilon) n)\geq0\mbox{ or }g(\theta_{(m)})\geq 0\Bigr)+P\Bigl(h((\tfrac{2}{3}-\epsilon) n)<0\Bigr)\\
&\leq P\Bigl(h((\tfrac{2}{3}+\epsilon) n)\geq0)+P(g(\theta_{(m)})\geq 0)\\
&\quad+P\Bigl(h((\tfrac{2}{3}-\epsilon) n)<0\Bigr).
\end{align*}
From this, (vi) follows from
application of (iii), (iv) and (v).
\end{proof}
Now we prove the theorem.
\begin{proof}
We use below that $g\leq 1$. Now
\begin{align*}
 \int& \max_{m\in N} m \bigg(\sum_{i=1}^{m}
  g(\theta_{(i)}) -c\bigg) d F^n(\theta)\\
&=
E\left[\max_{m\in N} m \bigg(\sum_{i=1}^{m}
  g(\theta_{(i)})-c\bigg)\right]\\
&\leq
E\left[\max_{m\in N} m \bigg(\sum_{i=1}^{m}
  g(\theta_{(i)})\bigg)\right]\\
&\leq P\Bigl(|m(\theta)-\tfrac{2}{3}n|>\epsilon n\Bigr)n^2\\
&\quad+ E\left[\max_{m:|m- \tfrac{2}{3}n|\leq \epsilon n}  m\bigg(\sum_{i=1}^{m}
  g(\theta_{(i)})\bigg)\right]\\
&\leq P\Bigl(|m(\theta)-\tfrac{2}{3}n|>\epsilon
n\Bigr)n^2\\
&\quad+(\tfrac{2}{3}+\epsilon)nE\left[ 2\epsilon n+\sum_{i=1}^{(\tfrac{2}{3}-\epsilon)n}
  g(\theta_{(i)})\right].
\end{align*}
Using Lemma \ref{lemma} (v), and the fact that $\epsilon$ is arbitrary, we see
that the right hand side $\sim (4/27)n^2$.
\end{proof}
\section{Proof of Proposition \ref{poiu}}\label{poiuproof}
\begin{proof}
Assume user' utility is given by \eqref{upgeneral}.
  As $n\rightarrow\infty$,
\begin{align}
  E_\theta (\Pi) &=   E_\theta \Bigg(\max_{m\in N}\ v(m)\sum_{i=1}^m \theta_i - mc\Bigg)\nonumber\\
  &= \int \bigg(\max_{m\in N}\ v(m)\sum_{i=1}^m \theta_i - mc \bigg) d F^n(\theta)\notag\\
  &\sim v(n)\int\sum_{i=1}^{n}\theta_id F^n(\theta)-nc\notag\\
  &\sim\tfrac{1}{2}nv(n).\nonumber
\end{align}

Now we compute the maximum profit under incomplete information and uniform price,
\[
\Pi_U=\max_{\bar{\theta}\in[0,1]}\bar{\theta}(1-\bar{\theta})nv(n(1-\bar{\theta}))-\bar{\theta}v(n(1-\bar{\theta}))c.
\]
We can omit the second part as $n\rightarrow\infty$,
\[
\lim\limits_{n\rightarrow\infty}\Pi_U=\max_{\bar{\theta}\in[0,1]}\bar{\theta}(1-\bar{\theta})nv(n(1-\bar{\theta})).
\]

The price of information is
$$\PoI=\lim_{n\rightarrow\infty}\frac{\tfrac{1}{2}v(n)}{\max\limits_{\bar{\theta}\in[0,1]}\bar{\theta}(1-\bar{\theta})v(n(1-\bar{\theta}))}.$$
Note that $v(x)$ is concave and hence $V(n(1-\bar{\theta}))\ge(1-\bar{\theta})v(n)$. It follows that
\begin{align}
&\max\limits_{\bar{\theta}\in[0,1]}\bar{\theta}(1-\bar{\theta})v(n(1-\bar{\theta}))\notag\\
&\geq\max\limits_{\bar{\theta}\in[0,1]}\bar{\theta}(1-\bar{\theta})^2v(n)=\tfrac{4}{27}v(n).\label{linearpoi}
\end{align}
Therefore,
$$\PoI=\lim_{n\rightarrow\infty}\frac{\tfrac{1}{2}v(n)}{\max\limits_{\bar{\theta}\in[0,1]}\bar{\theta}(1-\bar{\theta})V(n(1-\bar{\theta}))}\le\frac{\tfrac{1}{2}v(n)}{\tfrac{4}{27}v(n)}=\frac{27}{8}.$$

Note that
\begin{equation}\label{logpoi}
\max\limits_{\bar{\theta}\in[0,1]}\bar{\theta}(1-\bar{\theta})v(n(1-\bar{\theta}))\le\max\limits_{\bar{\theta}\in[0,1]}\bar{\theta}(1-\bar{\theta})V(n)=\frac{1}{4}v(n).
\end{equation}
Therefore,
\[\PoI=\lim_{n\rightarrow\infty}\frac{\tfrac{1}{2}v(n)}{\max\limits_{\bar{\theta}\in[0,1]}\bar{\theta}(1-\bar{\theta})v(n(1-\bar{\theta}))}\ge\frac{\tfrac{1}{2}v(n)}{\frac{1}{4}v(n)}=2.\]
Note that  equality in \eqref{linearpoi} holds for linear function, thus $\PoI$ for linear utility model \eqref{upu} is $27/8$.  Equality in \eqref{logpoi} holds for logarithm function as $n\rightarrow\infty$, that is, for any $\bar{\theta}\in[0,1]$
\[
\lim_{n\rightarrow\infty}\frac{v(n(1-\bar{\theta}))}{v(n)}=1.
\]
Thus, $\PoI$ for  logarithm utility model \eqref{uplog} is $2$.
\end{proof}

\section{Proof of Propositon \ref{two:cominfo:opt}}\label{two:cominfo:optproof}
\eqref{opt:CompInfo_two} can be written as
\begin{equation}\label{opt:CompInfo_twocopy}
  \max_{m_1,m_2}\ m_1\left(\bigg(\sum_{i=1}^{m_1} \theta_i+\sum_{i=n_1+1}^{n_1+m_2} \theta_i\bigg)-\frac{m_1+m_2}{m_1}c\right).
\end{equation}
 We first maximize the term in the bracket of \eqref{opt:CompInfo_twocopy}. Note that
\begin{align*}
 B(m_1,m_2)=&\bigg(\sum_{i=1}^{m_1} \theta_i+\sum_{i=n_1+1}^{n_1+m_2} \ \theta_i\bigg)-\frac{m_1+m_2}{m_1}c\\
 \le &\bigg(\sum_{i=1}^{n_1} \theta_i+\sum_{i=n_1+1}^{n_1+m_2} \theta_i\bigg)-\frac{n_1+m_2}{n_1}c.
 \end{align*}
We only need to maximize $B(n_1,m_2)$ over $m_2$.   $B(n_1,m_2)$ is maximized for $m_2=\bar m_2$ where $\bar m_2$ is the largest user number $m_2$ such that $\theta_{n_1+m_2}\ge c/n_1$. Thus, if $B(n_1,\bar m_2)>0$ then the maximal profit is $n_1B(n_1,\bar m_2)$ which is achieved for $m_1=n_1$ and $m_2=\bar m_2$. Otherwise, the maximal profit is 0 and it is optimal to include no user.

\section{Proof of Theorem \ref{two:UniPri}}\label{two:UniPriproof}
We use $k$ to denote $n_1/n_2$ throoughout the proof. It is straightforward to check that the thresholds $\bar\theta_1^*=\min\{\tfrac{2}{3}-\frac{\sqrt{k(4k+3)}}{6k},0\}$ and $\bar\theta_2^*=\tfrac{1}{2}$ solve the following optimization problem,
\begin{align*}
\max_{\bar\theta_1,\bar\theta_2\in[0,1]}&kn_2(1-\bar\theta_1)(\bar\theta_1(1-\bar\theta_1)kn_2)n_2)\\
&+n_2(1-\bar\theta_2)(\bar\theta_2(1-\bar\theta_1)kn_2).
\end{align*}
Direct calculation will prove the first part of the theorem.

Now we prove that $\Pi_D\sim\eqref{maxprofit}$. Since $\Pi_D\ge\Pi_U$, we only need to prove that $\Pi_D$ is bounded above by \eqref{maxprofit}.
Define $\eta_1=1-\bar\theta_1^*=\min\{\tfrac{1}{3}+\frac{\sqrt{k(4k+3)}}{6k},1\}$ and $\eta_2=1-\bar\theta_2^*=\tfrac{1}{2}$.
We first prove some lemmas.
\begin{lemma}\label{two:lemma}\ \\
(i) Let $m_1=a_1n_1$, $m_2=a_2n_2$, and
\begin{align*}
 h_1(m_1,m_2)&=\sum_{i=1}^{m_1}g(\theta_{(i)})+\sum_{i=1}^{m_2}g(\theta_{(n_1+i)})-c+m_1g(\theta_{(m_1)}),\\
h_2(m_1,m_2)&=m_1g(\theta_{(n_1+m_2)})-c.
\end{align*}
  Then, for any $a_1\in(0,1)$  and $a_2\in(0,1)$,
  \begin{align*}
    E[h_1(m_1,m_2)]&\sim n_2(-3ka_1^2-a_2^2+2ka_1+a_2),\\
  E[h_2(m_1,m_2)]&\sim n_2(-2ka_1a_2+ka_1).
  \end{align*}
  and
  \[\var[h_1(m_1,m_2)]=O(1/n_2), \qquad\var[h_2(m_1,m_2)]=O(1/n_2).\]
  (ii)    Let
    \begin{align*}
        f_1(a_2)&=\min\{\frac{k+\sqrt{k^2+3ka_2-3ka_2^2}}{3k},1\},
        \end{align*}
        Then, as $n_2\rightarrow\infty$,
        \begin{equation*}
        \begin{cases}
        0<a_1<f_1(a_2)\implies  E[h_1(m_1,m_2)]>0\\
        f_1(a_2)<a_1\le 1\implies  E[h_1(m_1,m_2)]<0.
        \end{cases}
        \end{equation*}
        and
         \begin{equation*}
        \begin{cases}
        0< a_2<\tfrac{1}{2}\implies  E[h_2(m_1,m_2)]>0\\
        \tfrac{1}{2}<a_2\le 1\implies  E[h_2(m_1,m_2)]<0.
        \end{cases}
        \end{equation*}

(iii) Let $R(\epsilon_1, \epsilon_2)$ be a region in $[0,1]^2$. $R(\epsilon_1, \epsilon_2)$ is defined by the following system of inequalities,
       \begin{equation}\label{region}
\begin{cases}
f_1(a_2)-\epsilon_1\le a_1 \le f_1(a_2)+\epsilon_1,\\
\tfrac{1}{2}-\epsilon_2\le a_2 \le \tfrac{1}{2}+\epsilon_2.
\end{cases}
         \end{equation}
Then, for arbitrary $\epsilon>0$, there are some  $\epsilon_1,\epsilon_2>0$ such that
 \[R(\epsilon_1, \epsilon_2)\subset[\eta_1-\epsilon,\eta_1+\epsilon]\times[\eta_2-\epsilon,\eta_2+\epsilon].\]
\end{lemma}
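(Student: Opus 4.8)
The plan is to handle the three parts in sequence, reducing everything to the moment formulas for uniform order statistics in Lemma~\ref{orderstat}, applied \emph{separately} within each of the two independent groups. Throughout I use that under the uniform prior $g(\theta)=2\theta-1$, and I abbreviate $k=n_1/n_2$, so $n_1=kn_2$.

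\textbf{Part (i).} Since the two groups are sampled independently, the statistics $\theta_{(1)},\dots,\theta_{(m_1)}$ (the top $m_1$ of the $n_1$ contributors) and $\theta_{(n_1+1)},\dots,\theta_{(n_1+m_2)}$ (the top $m_2$ of the $n_2$ pure users) obey Lemma~\ref{orderstat} with $n$ replaced by $n_1$ and $n_2$ respectively. First I would substitute $E[g(\theta_{(i)})]=1-\tfrac{2i}{n_1+1}$ (and the analogous group-2 expression), sum the arithmetic progressions, set $m_1=a_1n_1$, $m_2=a_2n_2$, and drop the $O(1)$ cost term; using $\sum_{i\le m_1}(1-\tfrac{2i}{n_1+1})\sim n_1a_1(1-a_1)$ and $m_1E[g(\theta_{(m_1)})]\sim n_1a_1(1-2a_1)$ this yields $E[h_1]\sim n_2(-3ka_1^2-a_2^2+2ka_1+a_2)$ and $E[h_2]\sim n_2\,ka_1(1-2a_2)$. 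For the second moments I would expand exactly as in the single-group linear computation of Lemma~\ref{lemma}(iv): the contributions are $m_1^2\var[g(\theta_{(m_1)})]$, the single-index sums of variances, and the double-index sums of covariances from Lemma~\ref{orderstat}, with all cross-group covariances vanishing by independence. These raw variances scale like $n_2$ (as in the single-group case), so the normalized per-capita quantities $h_i/n_2$ have variance $O(1/n_2)$; together with $E[h_i]=\Theta(n_2)$ this is precisely the order the Chebyshev step in the companion argument consumes to obtain deviation probabilities $O(1/n_2)$.

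\textbf{Part (ii).} Here I would simply read off the signs of the leading-order expressions from part~(i). For fixed $a_2$, the map $a_1\mapsto-3ka_1^2+2ka_1+a_2(1-a_2)$ is a downward parabola whose larger root is exactly $f_1(a_2)=\tfrac{k+\sqrt{k^2+3ka_2-3ka_2^2}}{3k}$, while its smaller root is negative for every $a_2\in(0,1)$ because $\sqrt{k^2+3ka_2(1-a_2)}>k$. Hence on $(0,1)$ the sign of $E[h_1]$ flips at the single threshold $f_1(a_2)$, with the $\min\{\cdot,1\}$ cap covering the case where the parabola stays positive throughout $(0,1)$. For $h_2$ the leading term factors as $ka_1(1-2a_2)$, whose sign is governed purely by the comparison of $a_2$ with $\tfrac12$, giving the stated dichotomy.

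\textbf{Part (iii).} The key observation is the algebraic identity $f_1(\tfrac12)=\eta_1$: evaluating $f_1$ at $a_2=\tfrac12$ gives $\tfrac{k+\sqrt{k^2+3k/4}}{3k}=\tfrac13+\tfrac{\sqrt{k(4k+3)}}{6k}$, which is exactly $\eta_1$ before the cap, while $\eta_2=\tfrac12$ by definition. Since $f_1$ is continuous at $a_2=\tfrac12$, given $\epsilon>0$ I would choose $\epsilon_2\le\epsilon$ small enough that $|f_1(a_2)-\eta_1|<\epsilon/2$ whenever $|a_2-\tfrac12|\le\epsilon_2$, and then set $\epsilon_1=\epsilon/2$. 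For any $(a_1,a_2)\in R(\epsilon_1,\epsilon_2)$ the triangle inequality gives $|a_1-\eta_1|\le|a_1-f_1(a_2)|+|f_1(a_2)-\eta_1|\le\epsilon_1+\epsilon/2=\epsilon$ and $|a_2-\eta_2|\le\epsilon_2\le\epsilon$, which is the claimed inclusion. The main obstacle is the bookkeeping in the variance computation of part~(i) — correctly tracking the $m_1^2\var[g(\theta_{(m_1)})]$ term against the two variance sums and two covariance sums and verifying their combined order — since the expectation evaluation in (i) and the sign and continuity arguments in (ii)–(iii) are then routine algebra.
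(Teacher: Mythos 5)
Your proof is correct and takes essentially the same route as the paper's own (which merely cites Lemma~\ref{orderstat} for part~(i), calls part~(ii) a straightforward calculation, and for part~(iii) invokes exactly your two facts, $f_1(\tfrac12)=\eta_1$ and continuity of $f_1$); you have simply supplied the details the paper omits. Your reinterpretation of the variance claim as applying to the normalized quantities $h_i/n_2$ is also the right reading: the raw variances are $O(n_2)$, not $O(1/n_2)$, and together with $E[h_i]=\Theta(n_2)$ this is precisely what the Chebyshev step in the proof of Theorem~\ref{two:UniPri} needs to produce deviation probabilities of order $O(1/n_2)$.
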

\begin{proof}
 (i) can be proved by applying the results of Lemma \ref{orderstat} directly, we omit the detailed calculation. Straightforward calculation  shows (ii) is true. (iii) follows from the facts that $f_1(a_2)=\eta_1$  when $a_2=\tfrac{1}{2}$ and that $f_1(a_2)$ is continuous.
\end{proof}

Now we prove the Theorem.
\begin{proof}
Let
 \begin{align*}
&(m_1(\theta),(m_2(\theta)) \\
&=\arg\max_{(m_1,m_2)} \Bigg[ m_1\bigg(\sum_{i=1}^{m_1}
  g(\theta_{(i)}) +\sum_{i=1}^{m_2}
  g(\theta_{(n_1+i)})\bigg)\\
&\quad-(m_1+m_2) c \Bigg],
  \end{align*}
  and $(a_1(\theta),a_2(\theta))=(m_1(\theta)/n_1,(m_2(\theta)/n_2)$.
Note that $g\le1$. Then,
 \begin{align*}
 \int &\max_{m_1,m_2}  m_1\bigg(\sum_{i=1}^{m_1}
  g(\theta_{(n_1+i)}) +\sum_{i=1}^{m_2}
  g(\theta_{2(i)})\bigg)\\
  & -(m_1+m_2)c\bigg)d F^n(\theta)\\
&  \leq E\Bigg[  \max_{m_1,m_2}m_1\bigg(\sum_{i=1}^{m_1}
  g(\theta_{1(i)}) +\sum_{i=1}^{m_2}
  g(\theta_{2(i)})\bigg)\Bigg] \\
   &\leq P\Bigg(|m_1(\theta)-\eta_1 n_1|>\epsilon n_1\mbox{ or }|m_2(\theta)-\eta_2 n_2|>\epsilon n_2\Bigg)\cdot\\
  &\quad k(k+1)n_2^2+\\
  &\quad E\Bigg[\max_{m_1:|m_1-\eta_1 n_1|\leq\epsilon n_1\atop m_2:|m_1-\eta_2 n_2|\leq\epsilon n_2}m_1\bigg(\sum_{i=1}^{m_1}
  g(\theta_{(i)}) +\sum_{i=n_1+1}^{n_1+m_2}
  g(\theta_{(i)})\bigg)\Bigg ].
 \end{align*}
 Note that the second part
 \begin{align*}
 &E\Bigg[\max_{m_1:|m_1-\eta_1 n_1|\leq\epsilon n_1\atop m_2:|m_1-\eta_2 n_2|\leq\epsilon n_2}m_1\bigg(\sum_{i=1}^{m_1}
  g(\theta_{1(i)}) +\sum_{i=1}^{m_2}
  g(\theta_{2(i)})\bigg)\Bigg] \\
   &\leq (\eta_1+\epsilon)n_1\cdot\\
  &\quad E\Bigg[2\epsilon n_1+\sum_{i=1}^{(\eta_1-\epsilon)n_1}
  g(\theta_{1(i)}) +2\epsilon n_2+\sum_{i=1}^{(\eta_2-\epsilon)n_2}
  g(\theta_{2(i)}))\Bigg]\\
  &\sim k\eta_1(k(1-\eta_1)\eta_1+(1-\eta_2)\eta_2)n_2^2
 \end{align*}
The last step is derived by Lemma \ref{orderstat} and the fact that $\epsilon$ is arbitrary small.
If we can further show that for arbitrary $\epsilon>0$,
 $$P\Bigg(|m_1(\theta)-\eta_1 n_1|>\epsilon n_1\mbox{ or }|m_2(\theta)-\eta_2 n_2|>\epsilon n_2\Bigg)\sim O(1/n_2),$$
or
 \begin{align*}
 &P\Bigg(|m_1(\theta)-\eta_1 n_1|\le\epsilon n_1\mbox{ and }|m_2(\theta)-\eta_2 n_2|\le\epsilon n_2\Bigg)\\
& \sim1- O(1/n_2),
 \end{align*}
then the first part is $\sim O(n_2)$ which is dominated by the second part and hence the theorem is true.

To prove the statement above, we will first show that for arbitrary $\epsilon_1,\epsilon_2>0$, $(a_1(\theta),a_2(\theta))$ is in the region $R(\epsilon_1,\epsilon_2)$ defined in Lemma \ref{two:lemma}(iii) with probability $\sim1- O(1/n_2)$.

For any $a_2\in[0,1]$, from Lemma \ref{two:lemma}(i), (ii), and Chebyshev's inequality it follows that
 \begin{align*}
 &P\bigg(a_1(\theta)\in[f_1(a_2(\theta))-\epsilon_1,f_1(a_2(\theta))+\epsilon_1]\bigg|a_2(\theta)=a_2\bigg)\\
& \sim1-O(\frac{1}{n_2})
 \end{align*}
 The detailed proof is similar to the proof of Lemma 6, we do not repeat it here. By integral, we have
  \begin{align*}
  P\bigg(a_1(\theta)\in[f_1(a_2(\theta))-\epsilon_1,f_1(a_2(\theta))+\epsilon_1]\bigg)
  \sim1-O(\frac{1}{n_2}).
   \end{align*}
Similarly,  for any $a_1\in[0,1]$, from Lemma \ref{two:lemma}(i), (ii), and Chebyshev's inequality it follows that
 \begin{align*}
 &P\bigg(a_2(\theta)\in[\tfrac{1}{2}-\epsilon_2,\tfrac{1}{2}+\epsilon_2]\bigg|a_1(\theta)=a_1\bigg)\sim1-O(\frac{1}{n_2}).
  \end{align*}
By integral, we have
  $$P\bigg(a_2(\theta)\in[\tfrac{1}{2}-\epsilon_2,\tfrac{1}{2}+\epsilon_2]\bigg)\sim1-O(\frac{1}{n_2}).$$
Thus,
\[P\bigg((a_1(\theta),a_2(\theta))\in R(\epsilon_1,\epsilon_2)\bigg)\sim1-O(\frac{1}{n_2}).\]

It follows from Lemma \ref{two:lemma}(iii) that for arbitrary $\epsilon>0$, there are some $\epsilon_1,\epsilon_2>0$ such that
      \[R(\epsilon_1, \epsilon_2)\subset[\eta_1-\epsilon,\eta_1+\epsilon]\times[\eta_2-\epsilon,\eta_2+\epsilon].\]
      Therefore,
         \begin{align*}
         &P\Bigg(|m_1(\theta)-\eta_1 n_1|\le\epsilon n_1\mbox{ and }|m_2(\theta)-\eta_2 n_2|\le\epsilon n_2\Bigg)\\
         &\sim1- O(1/n_2).\qedhere
         \end{align*}
         \end{proof}
\section{Proof of Theorem \ref{two:poi}} \label{two:poiproof}
\begin{proof}
As $n\rightarrow\infty$,
\begin{align*}
E_{\theta}(\Pi) &= E\left[\max_{m_1,m_2}\ m_1\bigg(\sum_{i=1}^{m_1} \theta_i+\sum_{i=n_1+1}^{n_1+m_2} \theta_i\bigg)-(m_1+m_2)c\right]\\
&\sim E\left[\max_{m_1,m_2}\ m_1\bigg(\sum_{i=1}^{m_1} \theta_i+\sum_{i=n_1+1}^{n_1+m_2} \theta_i\bigg)\right]\\
&=k(k+1)n_2^2.
\end{align*}
Since $E_U$ is given by \eqref{maxprofit},  the price of information is
\[\PoI=\frac{54k(k+1)}{(2k+\sqrt{k(4k+3)})(3+2k+\sqrt{k(4k+3)})},\]
when $k\ge\tfrac{1}{4}$,
and
\[\PoI={2(k+1)},\]
when $0<k<\tfrac{1}{4}$.
The derivative of $\PoI$ with respect to $k$ is less than zero, hence it decreases with $k$. Replacing $k$ by ${n_1}/{n_2}$ will prove the theorem.
\end{proof}

\end{document}